\documentclass[lettersize,journal]{IEEEtran}
\usepackage{amsmath,amsfonts}
\usepackage{algorithmic}
\usepackage{algorithm}
\usepackage{array}
\usepackage[caption=false,font=normalsize,labelfont=sf,textfont=sf]{subfig}
\usepackage{textcomp}
\usepackage{stfloats}
\usepackage{url}
\usepackage{verbatim}
\usepackage{graphicx}
\usepackage{cite}
\usepackage{comment}
\usepackage{amssymb}
\usepackage{amsthm}
\usepackage{braket}
\usepackage{mathtools}
\usepackage{xcolor}

\usepackage[colorinlistoftodos,textwidth=\marginparwidth]{todonotes} % 

\newtheorem{theorem}{Theorem}
\newtheorem{lemma}{Lemma}
\newtheorem{corollary}{Corollary}

\newtheorem{proposition}{Proposition}
\newtheorem{remark}{Remark}
\begin{document}

\title{Robust Covert Quantum Communication\\ under Bounded Channel Uncertainty%\thanks{Code will be made available upon acceptance.}
}

\author{Abbas Arghavani, Alessandro V. Papadopoulos, Vahid Azimi Mousolou, Giuseppe Nebbione, Shahid Raza
        % <-this % stops a space
\thanks{This project was fully funded by M\"{a}lardalen University through its strategic profile area Trusted Smart Systems (TSS).}
\thanks{A. Arghavani, A.V. Papadopoulos, and G. Nebbione are with the Department of Computer Science and Engineering, M{\"a}lardalen University, V{\"a}ster{\aa}s, Sweden (email: abbas.arghavani@mdu.se, alessandro.papadopoulos@mdu.se, giuseppe.nebbione@mdu.se)}
%\thanks{Alessandro V. Papadopoulos is with the Department of Computer Science and Engineering, M{\"a}lardalen University, V{\"a}ster{\aa}s, Sweden (email: alessandro papadopoulos@mdu.se)}
\thanks{Vahid Azimi Mousolou is with the Department of Physics and Astronomy, Uppsala University, Uppsala, Sweden (email: vahid.azimi-mousolou@physics.uu.se)}
%\thanks{Giuseppe Nebbione is with the Department of Computer Science and Engineering, M{\"a}lardalen University, V{\"a}ster{\aa}s, Sweden (email: giuseppe.nebbione@mdu.se)}
\thanks{Shahid Raza is with the School of Computing Science, University of Glasgow, Glasgow, UK, and also with the Department of Computer Science and Engineering, M\"alardalen University, V\"aster{\aa}s, Sweden (email: shahid.raza@glasgow.ac.uk).}
}

% The paper headers
%\markboth{Journal of \LaTeX\ Class Files,~Vol.~14, No.~8, August~2021}%
%{Shell \MakeLowercase{\textit{et al.}}: A Sample Article Using IEEEtran.cls for IEEE Journals}

%\IEEEpubid{0000--0000/00\$00.00~\copyright~2021 IEEE}
% Remember, if you use this you must call \IEEEpubidadjcol in the second
% column for its text to clear the IEEEpubid mark.

\maketitle

\begin{abstract}
Existing theoretical limits for covert quantum communication are typically derived under idealized assumptions, namely, that channel parameters, such as transmissivity and background noise, are perfectly known and constant. However, real-world optical links, including satellite, fiber, and free-space systems, are subject to dynamic environmental conditions, calibration noise, and estimation errors that violate these assumptions. This paper addresses this gap by analyzing covert quantum communication over \emph{compound quantum optical channels} with \emph{bounded} uncertainty in both transmissivity and thermal noise, enabling worst-case guarantees for security-critical operation.

We develop a robust analytical framework that guarantees security and reliability across all admissible channel realizations. Crucially, robustness is \emph{not} obtained by naively substituting worst-case parameter values into known-channel covert bounds, because the channel realizations that extremize covertness and reliability occur at different corners of the uncertainty set.

A key insight is that the worst-case conditions for covertness and reliability are driven by opposite assumptions on environmental noise, even though both are governed by the same worst-case transmittance corner, leading to a fundamental trade-off that must be resolved to certify both stealth and decodability for secure system design. We derive a closed-form guaranteed worst-case lower bound on the expected number of covert qubits that can be reliably transmitted despite this uncertainty. Our analysis further reveals
a sharp feasibility boundary, a rate cliff edge, beyond which
the worst-case guaranteed covert payload collapses to zero,
and we quantify the corresponding security tax induced by the mismatch between the worst-case covertness and worst-case reliability extremizers under bounded uncertainty. We validate the Willie-side covertness component via QuTiP simulations of the underlying four-mode bosonic model, including Fock-space cutoff convergence to ensure numerical fidelity, and combine the resulting numerically estimated covertness constant with the analytical hashing-bound reliability model to evaluate the robust payload. Overall, our results move covert quantum communication from nominal, perfect-knowledge analyses to certified worst-case operation under uncertainty, opening the door to secure applications in adversarial and uncertain environments.
\end{abstract}
\vspace{-3mm}
\begin{IEEEkeywords}Covert quantum communication, compound channels, robust security, optical channels, quantum networking.
\end{IEEEkeywords}
\begin{center}
\footnotesize
\textit{This work has been submitted to the IEEE for possible publication.
Copyright may be transferred without notice, after which this version may no longer be accessible.}
\end{center}
\vspace{-5mm}
\section{Introduction}
\vspace{-1mm}
While cryptography protects the content of communication, covert communication (also known as Low Probability of Detection (LPD) communication) aims to conceal the existence of transmission itself~\cite{bash2013limits}. The fundamental limit of this paradigm is governed by the \emph{square-root law} (SRL), which asserts that only $\mathcal{O}(\sqrt{n})$ covert bits can be reliably transmitted over $n$ uses of a classical or quantum channel~\cite{bash2013limits, arghavani2023covert}. Extending these principles into the quantum domain has become critical for emerging security applications such as covert satellite links, secure financial transactions, and undetectable command-and-control in tactical environments~\cite{anderson2024covert, anderson2024square, anderson2025achievability, tahmasbi2021signaling}. 

In covert quantum communication, a legitimate transmitter (Alice) seeks to deliver information to a receiver (Bob) while ensuring that an adversarial warden (Willie) cannot detect that communication is occurring. As illustrated in Fig.~\ref{fig:simple_model}, this scenario can be modeled as an optical channel in which a fraction of the transmitted photons is inevitably lost to the environment. Willie can intercept and analyze these lost photons to infer Alice’s activity. This abstraction, detailed further in Fig.~\ref{fig:channel}, captures the essential detection challenge that covert communication seeks to overcome.

\begin{figure}[!t]
    \centering
    \includegraphics[width=0.5\linewidth]{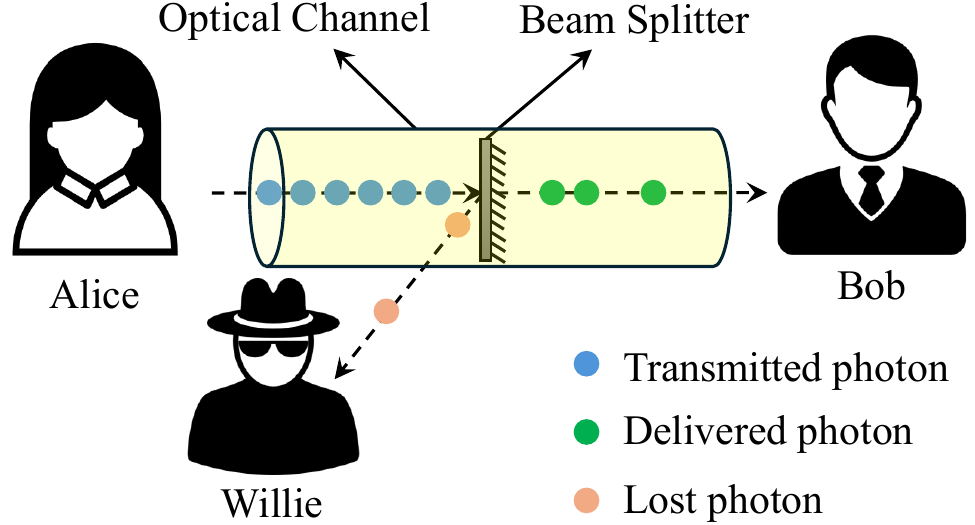}
    %\Description{High-level diagram of covert quantum communication showing Alice sending photons through a beam splitter channel with transmittance $\eta$. Bob receives a fraction of the photons while the rest reach the adversary Willie, illustrating the goal of keeping Willie’s observations indistinguishable from background noise.}
\caption{\textbf{Covert quantum communication schematic.} Beamsplitter channel with transmittance $\eta$: Bob gets $\eta$, Willie gets $1-\eta$. Alice enforces covertness by making Willie’s observations indistinguishable from thermal noise.}
    \label{fig:simple_model}
    \vspace{-5mm}
\end{figure}

Despite recent theoretical advances, existing quantum covert communication models rely on a key idealization: they assume that the channel parameters (transmittance ($\eta$) and thermal background noise ($\overline{n}_B$)) are perfectly known and remain constant. This assumption does not hold in realistic deployments such as satellite or fiber-optic links, where these parameters fluctuate due to alignment drift, atmospheric turbulence, and temperature variation. Consequently, systems designed under perfect-knowledge assumptions may fail to guarantee covertness or reliability once uncertainties are introduced. While probabilistic uncertainty models are useful for average-case analysis, they are less suitable for high-assurance covert communication, where even a single covertness failure may be unacceptable. We therefore adopt a bounded uncertainty model, which enables worst-case guarantees uniformly over all admissible channel realizations.

This paper addresses this gap by developing a robust analytical framework for \emph{covert quantum communication under bounded channel uncertainty}. We characterize the guaranteed covert payload when the channel parameters are not perfectly known. Let $M(n)$ denote the number of covert qubits reliably delivered over $n$ channel uses under a prescribed covertness constraint. Because Alice uses randomized transmission, $M(n)$ is a random block-level quantity, and $\mathbb{E}[M(n)]$ denotes the expected covert payload. Under bounded uncertainty, its guaranteed worst-case counterpart is denoted by $M_{\mathrm{rob}}(n)$.

We reveal and resolve a fundamental conflict: covertness is governed by channel conditions most favorable to Willie, whereas reliability is governed by conditions most adverse to Bob. This misalignment creates a nontrivial robust-design trade-off that is absent from perfect-knowledge models. Our analysis further shows that uncertainty does not merely reduce the guaranteed covert payload gradually; beyond a critical threshold, the worst-case guaranteed payload collapses abruptly to zero. We refer to this sharp feasibility boundary as the “rate cliff edge” and quantify the resulting “security tax” caused by the mismatch between the worst-case covertness and worst-case reliability extremizers.

A natural question is whether robustness can be obtained simply by substituting worst-case parameter values into known-channel bounds. Our analysis shows that this is not sufficient: the realizations that are worst for covertness and those that are worst for reliability generally occur at different corners of the uncertainty set. Hence, robust design requires enforcing both constraints jointly over the full compound set.

Our main contributions are summarized as follows:
\begin{itemize}
    \item \textbf{Compound-channel formulation for covert quantum links:}
    We model covert quantum communication over optical channels as a compound quantum channel with bounded uncertainty in transmittance $\eta$ and background noise $\bar{n}_B$, enabling uniform worst-case guarantees over all admissible channel realizations.

    \item \textbf{Conflict between covertness and reliability:}
    We show that the channel conditions most detrimental to covertness and reliability occur at different boundaries of the uncertainty set, so robust design cannot be obtained by a naive worst-case substitution.

    \item \textbf{Rate cliff edge and robust payload bound:}
    By jointly enforcing covertness and reliability over the compound set, we derive a closed-form worst-case guaranteed lower bound on the expected covert payload and identify a sharp feasibility boundary beyond which guaranteed covert communication is impossible.

    \item \textbf{Quantification of the security tax:} Using MATLAB and QuTiP~\cite{qutip2}, we numerically validate the Willie-side covertness component of the analysis and quantify the payload loss caused by the mismatch between the covertness and reliability extremizers under bounded uncertainty.
\end{itemize}

These results bridge the gap between idealized covert quantum models and robust design under realistic channel uncertainty. To our knowledge, this is the first rigorous derivation of a worst-case guaranteed lower bound on covert quantum transmission over a compound optical channel. The framework is directly relevant to quantum-secure networks, satellite links, and stealth optical systems operating under fluctuating environmental conditions. The parameter regimes analyzed in this paper (e.g., transmittance $\eta \approx 0.9$ and mean background noise $\overline{n}_B \approx 0.1$) correspond to realistic operating points observed in free-space optical links under nighttime conditions~\cite{bourgoin2013comprehensive} and standard telecom fibers.

Section~\ref{sec:related_work} reviews foundational results in classical and quantum covert communication and outlines the critical gap addressed here. Section~\ref{sec:model} introduces the system model and performance metrics.
%, followed by the formal threat model in Section~\ref{sec:threat_model}.
Section~\ref{sec:robust} develops the analytical framework and derives the robust bound, while Section~\ref{sec:discussion} presents numerical validation and trade-off analysis. Section~\ref{sec:discussion_implications} discusses implications for system design and highlights the impact of uncertainty on performance. Finally, Section~\ref{sec:conclusion} concludes and describes future research directions.

\section{Related Work}
\label{sec:related_work}

Our work builds upon literature in classical and quantum information theory. In this section, we contextualize our contribution by reviewing the foundations of covert communication, its extension to the quantum domain, and the unresolved challenge of channel uncertainty.
\vspace{-3mm}
\subsection{Foundations of Classical Covert Communication}

Covert communication was first rigorously examined in classical channels, where its fundamental limits were formally characterized. The foundational principle is the square-root law (SRL)~\cite{bash2013limits}, which states that, for a channel with known noise characteristics, Alice can reliably and covertly transmit only $\mathcal{O}(\sqrt{n})$ bits over $n$ channel uses. This constraint arises because the energy of Alice’s signal must remain low enough to be statistically indistinguishable from background noise fluctuations, which scale as $\sqrt{n}$. Seminal works established the SRL for various classical channels~\cite{bash2013limits}. Later studies demonstrated that the SRL can be exceeded, and a positive covert rate achieved, when Willie is uncertain about the channel noise.

This highlights the dual role of uncertainty in covert communication: while it can be exploited to exceed fundamental limits in classical settings, it also complicates reliable guarantees. This tension also emerges in practice. For instance, recent work~\cite{padmal2025fat} investigates in-body covert communication using fat tissue, combining theoretical analysis with experimental validation. It demonstrates that covertness is achievable even in highly lossy and size-constrained settings, and emphasizes the real-world impact of physical-layer channel uncertainty, which we address in the quantum domain.

Our work revisits this tension in the quantum setting by analyzing how bounded uncertainty impacts covertness and reliability under worst-case assumptions. Related classical work has also explored cooperative and game-theoretic covert communication strategies, including dynamic role-switching between transmitting and jamming nodes~\cite{arghavani2024dynamic} and multi-warden covert communication models~\cite{arghavani2021game}, although these directions are outside the scope of the present work.
\vspace{-3mm}
\subsection{Classical-Quantum and Fully Quantum Extensions}%Extension to Classical-Quantum and Fully Quantum Channels

The extension of covert communication to the quantum realm began with classical-quantum (c-q) channels, where a classical input from Alice produces a quantum state at the receiver. This model captures quantum effects while retaining classical signaling. Early work extended the SRL and proved achievability for product-state inputs, though the converse remained open~\cite{sheikholeslami2016covert}. Later analysis characterized the fundamental limits and required secret key sizes~\cite{bullock2025fundamental}.

More recently, the analysis has advanced to fully quantum channels, where Alice transmits quantum states (qubits) rather than classical bits. Foundational work on the bosonic channel (the standard model for optical communication) was conducted by Anderson et al.~\cite{anderson2024square}, who demonstrated that the SRL also applies to covert qubit transmission using dual-rail photonic encoding. Their follow-up study derived key analytical expressions for the performance metrics central to our analysis~\cite{anderson2024covert}. Subsequently, this framework was extended to prove the achievability of SRL for arbitrary quantum channels~\cite{anderson2025achievability}. 
\vspace{-5mm}
\subsection{The Research Gap: Channel Parameter Uncertainty}
A common assumption in the covert quantum communication literature~\cite{sheikholeslami2016covert, anderson2024square, anderson2025achievability} is that Alice possesses perfect knowledge of the channel state information (CSI), and that the CSI (specifically, $\eta$ and $\bar{n}_B$) remains constant. This is a strong idealization that rarely holds in practice, where environmental variability and imperfect sensing introduce uncertainty. Although the impact of CSI uncertainty has been extensively studied in \textit{classical} covert communication, particularly in satellite systems requiring robust transmission design~\cite{jia2025robust}, a corresponding robust analysis of the \textit{quantum} SRL has remained an open and critical gap. This challenge has already been addressed in the classical domain under practical satellite communication scenarios, where the impact of dual-CSI uncertainty (arising from imperfect knowledge of both legitimate and adversarial channels) was modeled and mitigated using robust beamforming techniques~\cite{jia2025robust}. Their results highlight the need for resilience against environmental uncertainty in real-world covert systems. Our work brings this robustness perspective to the quantum setting, addressing the analogous challenge for covert optical communication under bounded channel uncertainty. By introducing a bounded uncertainty model, we extend the existing quantum framework to a more realistic setting and reveal the fundamental trade-off between guaranteeing covertness and ensuring reliability, a tension that remains hidden in models assuming perfect channel knowledge. This robust extension paves the way for covert quantum systems that are both theoretically sound and practically resilient.

From an information-theoretic perspective, our model of bounded uncertainty transforms the communication problem into one of determining robust performance over a \textit{compound quantum channel}. A compound channel is a model in which the actual channel is unknown but known to belong to a specified set, and a single, universal coding scheme must guarantee reliable communication for all channel realizations within that set. The classical capacity of such channels was formally established by Bjelaković and Boche~\cite{Bjelakovic2009}, who developed a framework for characterizing achievable rates in the absence of covertness constraints.

\section{System Model and Performance Metrics}
\label{sec:model}
\begin{figure*}[b!]
    \centering    \includegraphics[width=0.7\linewidth]{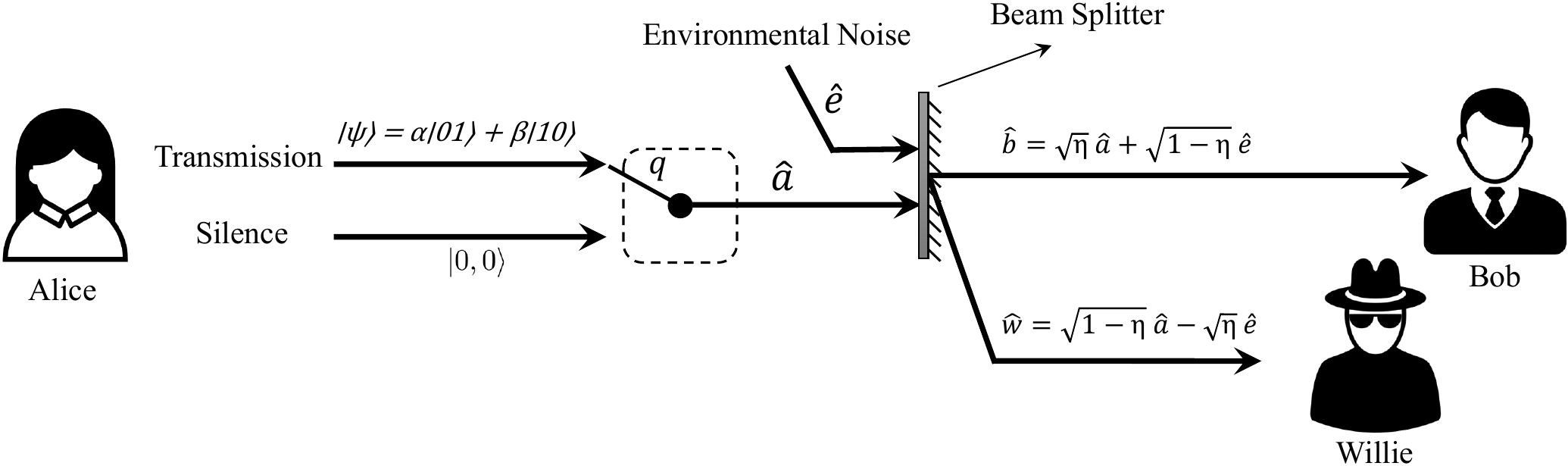}
    %\Description{Diagram of the covert quantum communication channel. Alice sends dual-rail single-photon qubits through a lossy thermal-noise bosonic channel modeled by a beamsplitter, producing Bob’s received mode and Willie’s intercepted mode according to the transmittance parameter $\eta$.}
    \caption{\textbf{Covert quantum channel model.} Alice sends a dual-rail qubit $\ket{\psi}=\alpha\ket{01}+\beta\ket{10}$ with probability $q$ (else vacuum $\ket{00}$). A beamsplitter mixes input $\hat a$ with thermal noise $\hat e$, yielding $\hat b=\sqrt{\eta}\hat a+\sqrt{1-\eta}\hat e$ (Bob) and $\hat w=\sqrt{1-\eta}\hat a-\sqrt{\eta}\hat e$ (Willie), with transmittance $\eta$.
    %\textbf{Channel model for covert quantum communication.} Alice transmits single-photon qubits, encoded via the dual-rail scheme, through a lossy thermal-noise bosonic channel modeled by a beamsplitter. With probability $q$, she sends $\ket{\psi} = \alpha\ket{01} + \beta\ket{10}$; otherwise, she remains silent and transmits $\ket{00}$. The beamsplitter mixes Alice’s input $\hat{a}$ with environmental noise $\hat{e}$, producing outputs $\hat{b} = \sqrt{\eta}\hat{a} + \sqrt{1-\eta}\hat{e}$ (received by Bob) and $\hat{w} = \sqrt{1-\eta}\hat{a} - \sqrt{\eta}\hat{e}$ (observed by Willie), where $\eta$ is the channel transmittance.
    }
    \label{fig:channel}
\end{figure*}
To analyze the impact of channel uncertainty, we first review the system model and performance metrics developed by Anderson et al.~\cite{anderson2024covert, anderson2024square}. These form the foundation of our robust formulation and are recast here to ensure self-containment and accessibility.
\vspace{-3mm}
\subsection{Physical System and Channel Model}\label{sec:model_channel}

We consider a covert quantum communication scenario involving three parties: a sender (Alice), a legitimate receiver (Bob), and a passive warden (Willie). Alice aims to transmit quantum information to Bob while ensuring that Willie cannot detect whether any communication is taking place. Time is divided into frames. In each frame, the optical channel is used $n$ times. We refer to each channel use as a \emph{slot}. In slot $t\in\{1,\ldots,n\}$, Alice decides probabilistically whether to transmit a qubit: with probability $q$, she sends a signal state; with probability $1 - q$, she transmits the vacuum state, $\ket{00}$ (i.e., remains silent). The signal state is a single qubit, physically implemented using dual-rail encoding, a widely used technique in linear optical quantum computing~\cite{anderson2024covert}. In this scheme, a logical qubit is represented by a single photon distributed across two distinct optical modes. The logical basis states, denoted with the subscript $L$, are defined as: $\ket{0}_L = \ket{01}$, and $\ket{1}_L = \ket{10}$, where $\ket{01}$ and $\ket{10}$ are single-photon Fock states. A general qubit is then expressed as a superposition:
\[
\ket{\psi} = \alpha \ket{0}_L + \beta \ket{1}_L = \alpha \ket{01} + \beta \ket{10},
\]
with $|\alpha|^2 + |\beta|^2 = 1$.

The physical transmission channel is modeled as a \emph{lossy thermal-noise bosonic channel}, which captures free-space and fiber-optic quantum links. As illustrated in Fig.~\ref{fig:channel}, this is implemented using a beamsplitter with two input ports and two output ports. One input port receives Alice's signal mode ($\hat{a}$), while the other receives an environmental noise mode ($\hat{e}$) modeled as a thermal state with mean photon number $\bar{n}_B$. The beamsplitter divides the signal such that one output mode reaches Bob ($\hat{b}$) and the other reaches Willie ($\hat{w}$). The input-output relations for the modal annihilation operators are:
\begin{align}
    \hat{b} &= \sqrt{\eta} \hat{a} + \sqrt{1-\eta} \hat{e}, \\
    \hat{w} &= \sqrt{1-\eta} \hat{a} - \sqrt{\eta} \hat{e},
\end{align}
where $\eta \in (0,1)$ is the beamsplitter’s transmittance, representing the fraction of the signal delivered to Bob. The remaining fraction, \(1 - \eta\), is intercepted by Willie.

The environmental mode $\hat{e}$ is assumed to be in a zero-mean thermal state $\hat{\rho}_{\bar{n}_B}$ with mean photon number $\bar{n}_B$. In the Fock basis, this state is given by the diagonal density operator:
\[
\hat{\rho}_{\bar{n}_B} = \sum_{k=0}^{\infty} \frac{(\bar{n}_B)^k}{(1+\bar{n}_B)^{k+1}} \ket{k}\bra{k}.
\]

This channel model captures the competing roles of transmittance and noise in determining both Bob's ability to decode the message and Willie's ability to detect transmission. In our numerical evaluations, these parameters \((\eta,\bar n_B)\) are instantiated using reported measurements from urban nighttime FSO and short-reach fiber links (see Section~\ref{sec:discussion_implications}).
\vspace{-3mm}

\subsection{Covertness Condition and Security Metrics}
From Willie's perspective as the adversary, the objective is to determine whether Alice is transmitting or not. His optimal detection strategy is formally modeled as a binary quantum hypothesis test between two scenarios:

\begin{itemize}    
    \item \textbf{$H_0$}: Alice is silent for the entire frame. Willie observes the joint product state over $n$ time slots, which is expressed as $\bigotimes_{t=1}^{n} \hat{\rho}_{0_t}^W$. Here, $\hat{\rho}_{0_t}^W$ represents the noise state Willie observes in time slot $t$. Since the noise state is identical for all slots, the joint state for the entire frame can be written more compactly as $(\hat{\rho}_0^W)^{\otimes n}$.

    \item \textbf{$H_1$}: Alice transmits according to the probabilistic scheme described above. Willie observes the perturbed state $\hat{\rho}_1^W$.
\end{itemize}

To achieve covertness, Alice must ensure that Willie's ability to distinguish between these two hypotheses is limited. Willie's detection task involves two types of errors: a \textit{false alarm} (a.k.a. type I error), where he decides $H_1$ when $H_0$ is correct, and a \textit{missed detection} (a.k.a. type II error), where he decides $H_0$ when $H_1$ is correct. These correspond to the error probabilities $P_{\text{FA}} = P(\text{decides } H_1 \mid H_0)$ and $P_{\text{MD}} = P(\text{decides } H_0 \mid H_1)$, respectively. Following the standard convention in covert communication, we assume equal prior probabilities, $P(H_0) = P(H_1) = 1/2$, which leads to an average error for Willie $P_e = \tfrac{1}{2}(P_{\text{FA}} + P_{\text{MD}})$.  The formal covertness condition requires that Willie's error probability, $P_{e}$, be close to that of random guessing~\cite{arghavani2023covert}:
\begin{equation}\label{eq:covertness_1}
P_{e} \ge \frac{1}{2} - \delta,
\end{equation}
for some small security parameter $\delta > 0$. To operationalize the covertness condition in Eq.~\eqref{eq:covertness_1}, we must relate it to the quantum states Willie observes. The Helstrom bound~\cite{helstrom1976quantum} provides this link, connecting the minimum error probability to the trace distance between the two hypotheses, namely $\|\hat{\rho}_1^W - (\hat{\rho}_0^W)^{\otimes n}\|_1$. The minimum error probability Willie can achieve is given by the Helstrom bound:
\[
    P_{e, \min} = \frac{1}{2} \left(1 - \frac{1}{2} \left\| \hat{\rho}_1^W - (\hat{\rho}_0^W)^{\otimes n} \right\|_1 \right).
\]

Substituting the Helstrom bound into Eq.~\eqref{eq:covertness_1} yields:
\[
    \frac{1}{2} \left(1 - \frac{1}{2} \left\| \hat{\rho}_1^W - (\hat{\rho}_0^W)^{\otimes n} \right\|_1 \right) \ge \frac{1}{2} - \delta.
\]
Rearranging this inequality leads to the equivalent condition for the trace distance:
\begin{equation}
    \label{eq:trace_dist_bound}
    \|\hat{\rho}_1^W - (\hat{\rho}_0^W)^{\otimes n} \|_1 \le 4\delta.
\end{equation}

While the trace distance provides a fundamental security guarantee, a more tractable approach involves using the quantum Pinsker inequality~\cite{anderson2024square, wilde2017quantum} to derive a sufficient condition based on the Quantum Relative Entropy (QRE), i.e., $D(\cdot \| \cdot)$.
The Pinsker inequality relates the trace distance ($T = \frac{1}{2}\|\hat{\rho}_1^W-(\hat{\rho}_0^W)^{\otimes n}\|_1$) and the QRE via the bound
\[
    T \le \sqrt{\frac{1}{2} D(\hat{\rho}_1^W \| (\hat{\rho}_0^W)^{\otimes n})}.
\]
To ensure that the covertness requirement $T \le 2\delta$, as given in Eq.~(\ref{eq:trace_dist_bound}), is satisfied, we can impose a stricter condition on the QRE. Integrating the Pinsker inequality with the covertness requirement provides a modified covertness condition:
\[
    \sqrt{\frac{1}{2} D(\hat{\rho}_1^W \| (\hat{\rho}_0^W)^{\otimes n})} \le 2\delta.
\]
Squaring both sides and rearranging yields our final, analytically convenient covertness requirement
\begin{equation}
\label{eq:qre_bound} 
    D(\hat{\rho}_1^W \| (\hat{\rho}_0^W)^{\otimes n}) \le 8\delta^2.
\end{equation}

At this point, we emphasize that the covertness derivation below follows the small-$q$ asymptotic framework of \cite{anderson2024covert, anderson2024square}. In particular, the additivity reduction and the $\chi^2$-based expansion are used in the covert regime, where the randomized transmission probability scales as $q=\mathcal{O}(n^{-1/2})$. Accordingly, the resulting expression for $c_{\mathrm{cov}}$ should be interpreted as the leading-order constant governing the square-root-law covertness constraint in this paper.

To obtain an analytically tractable covertness bound, we employ the quantum $\chi^2$-divergence as an upper bound on the QRE within this small-$q$ asymptotic regime, ensuring that the resulting design rule remains conservative at the level of the adopted approximation. Following the approach in~\cite{anderson2024square}, our analysis relies on the key assumption that the channel to Willie is entanglement-breaking. An entanglement-breaking channel destroys any initial entanglement between a transmitted quantum state and an external system~\cite{anderson2024covert,anderson2024square}. This assumption is crucial for analytical tractability because it ensures that the QRE is additive across multiple channel uses, which simplifies the covertness analysis significantly:\footnote{ While common in this context, the necessity of an entanglement-breaking channel for covert communication is still an open area of research.}
\[
D(\hat{\rho}_1^W \| (\hat{\rho}_0^W)^{\otimes n}) \approx n \cdot D(\hat{\rho}_{1,\text{slot}}^W \| \hat{\rho}_0^W).
\]

For small $q$, the per-slot state \( \hat{\rho}_{1,\text{slot}}^W \) is only a slight perturbation of \( \hat{\rho}_0^W \), allowing a bound via the quantum $\chi^2$-divergence, denoted by $D_{\chi^2}(\cdot\|\cdot)$, which provides an analytically tractable upper bound on the quantum relative entropy:
\[
D(\hat{\rho}_{1,\text{slot}}^W \| \hat{\rho}_0^W) \le D_{\chi^2}(\hat{\rho}_{1,\text{slot}}^W \| \hat{\rho}_0^W) \approx q^2 \cdot \frac{(1-\eta)^2}{\eta \overline{n}_B (1 + \eta \overline{n}_B)}.
\]

Substituting into the QRE bound in Eq.~\eqref{eq:qre_bound} gives:
\[
n \cdot q^2 \cdot \frac{(1 - \eta)^2}{\eta \overline{n}_B (1 + \eta \overline{n}_B)} \lesssim 8\delta^2.
\]
Rearranging this yields a square-root law constraint on \( q \), where the channel-dependent terms are consolidated into the \emph{covertness constant}:
\begin{equation}
\label{eq:ccov}
c_{\text{cov}} = \frac{\sqrt{2 \eta \, \overline{n}_B (1 + \eta \, \overline{n}_B)}}{1 - \eta}.
\end{equation}
Thus, the covert transmission probability must satisfy~\cite{anderson2024square}:
\[
q \le \frac{2 \delta \cdot c_{\text{cov}}}{\sqrt{n}}.
\]
A higher $c_{\mathrm{cov}}$ enables a larger $q$, which yields a larger covert payload while maintaining covertness. Although constant factors depend on the specific bounding and approximation steps used, the square-root-law scaling remains unchanged.
\vspace{-3mm}
\subsection{The Reliability Condition}
For slots where Alice transmits, Bob must decode a noisy, infinite-dimensional bosonic state rather than a clean qubit. To analyze reliability, Anderson et al.~\cite{anderson2024square} use an effective model capturing noise. Two main error sources are:
\begin{enumerate}
    \item \textbf{Channel noise:} Photon loss (linked to $\eta$) and thermal noise ($\overline{n}_B$) distort the transmitted state.
    \item \textbf{Projection error:} Bob projects the received state onto the dual-rail subspace $\text{span}\{ \ket{01}, \ket{10} \}$, discarding components outside this two-dimensional space. This projection is probabilistic and can fail.
\end{enumerate}

The analysis in~\cite{anderson2024square} models the end-to-end effect on Bob's qubit as an effective depolarizing channel. This channel maps an input qubit state $\rho_{\text{in}}$ to an output state $\rho_{\text{out}}$ according to:
\[
    \rho_{\text{out}} = (1 - p) \rho_{\text{in}} + p \cdot \frac{I}{2}.
\]
In this model, $p$ is the depolarizing parameter, and the term $I/2$ represents the maximally mixed state, which corresponds to complete random noise. Conceptually, this channel describes a process of depolarization: with probability $p$, the qubit's state is erased and replaced by random noise, while with probability $1-p$, it is transmitted without error. The physical channel conditions determine the value of $p$, and its closed-form expression is~\cite{anderson2024square}:
\begin{equation}
    \label{eq:p}
    p = 1 - \frac{\eta}{\left(1 + (1 - \eta)\overline{n}_B \right)^4}.
\end{equation}

This depolarizing model implies an equivalent set of probabilistic Pauli errors. The corresponding Pauli error vector, $\vec{p}$, is given by~\cite{anderson2024square}:
\[
    \vec{p} = \left[ 1 - \frac{3p}{4}, \frac{p}{4}, \frac{p}{4}, \frac{p}{4} \right],
\]
representing the respective probabilities for the identity operation and the three Pauli error types (X, Y, Z).\footnote{The Pauli operators, $\{I, X, Y, Z\}$, form a basis for single-qubit errors. The identity operator ($I$) represents no error, the $X$ operator causes a bit-flip, the $Z$ operator causes a phase-flip, and the $Y$ operator causes both.}

The communication rate achievable over this effective channel, denoted $R$, is given by the quantum \emph{hashing bound}~\cite{anderson2024square}:
\begin{equation}\label{eq:comm_rate}
R = \left[ 1 - H(\vec{p}) \right]^+,
\end{equation}
where $H(\vec{p}) = -\sum_i p_i \log_2 p_i$ is the Shannon entropy and $[x]^+ = \max(x, 0)$. This rate quantifies reliability, the maximum throughput per transmitted qubit. It trades off directly with the transmit probability $q$ set by covertness. Together, these constraints define the security envelope of the system.
\vspace{-3mm}
\subsection{Transition to Robustness and Uncertainty}
\label{sec:uncertainty_transition}

While the preceding metrics characterize performance for a
fixed $(\eta,\bar n_B)$, we now extend the analysis to a passive
warden setting with bounded channel uncertainty. To ensure high-assurance operation under environmental variability and estimation error, we model the unknown parameters through the bounded uncertainty set $\mathcal{U} = [\eta_{\min}, \eta_{\max}] \times [\bar{n}_{B,\min}, \bar{n}_{B,\max}]$. Here, Nature is a modeling abstraction that selects a realization from $\mathcal{U}$ adversarially so as to minimize Alice's guaranteed performance. The core analytical challenge is that covertness and reliability are generally extremized by different corners of $\mathcal{U}$. A robust design must therefore satisfy both constraints simultaneously across conflicting realizations to guarantee a lower bound on the covert payload.

\section{Robust Performance Bounds Under Uncertainty}
\label{sec:robust}

The idealized model discussed thus far assumes that Alice has perfect, instantaneous knowledge of the channel parameters. We now depart from this assumption to consider a more practical scenario where Alice has only bounded, deterministic knowledge of the transmittance and noise level:
$\eta \in [\eta_{\min}, \eta_{\max}]$, and $\bar n_B \in [\bar n_{B,\min}, \bar n_{B,\max}]$. This bounded uncertainty model marks an essential step toward a more realistic analysis. The assumption of perfect channel knowledge is physically unattainable, as all measurements are subject to noise and real-world channels inherently fluctuate. The uncertainty bounds $[\eta_{\min}, \eta_{\max}]$ and $[\bar n_{B,\min}, \bar n_{B,\max}]$ are not arbitrary. In practice, they would be estimated using a combination of active channel probing, environmental sensing, and predictive modeling before covert transmission. For instance, Alice could send faint, non-covert pilot signals to measure transmittance fluctuations over time, use radiometers to sense the ambient thermal background, or employ atmospheric models to predict the range of expected parameters.% for a given link.

However, these estimation procedures are themselves subject to statistical error, which may lead to a mismatch between the assumed and actual uncertainty regions. If the true channel parameters drift outside the estimated bounds, the security guarantees derived from this model would no longer be valid. A fully robust design must therefore consider not only the physical channel variation but also uncertainty in the estimated bounds themselves, a higher-order risk that motivates future work on adaptive bound calibration. Our framework thus addresses the fundamental engineering question: how can one design a covert communication system that guarantees performance despite imperfect knowledge of its environment?

The central challenge is to design a transmission strategy that guarantees both covertness and reliability for \emph{all} possible realizations of \( \eta \) and \( \bar n_B \) within these bounds. To derive such robust guarantees, Alice must adopt a conservative strategy based on a worst-case analysis. This approach, in turn, requires first establishing the monotonic behavior of the performance metrics with respect to the channel parameters, which we accomplish in the following.

\subsubsection*{Monotonicity and corner extremizers}
We first establish monotonicity of the key quantities with respect to $(\eta,\bar n_B)$; these immediately imply which corners of the uncertainty box govern robust design.

\begin{lemma} \label{lemma:ccov}
The covertness constant, $c_{\text{cov}}$ (Eq. \eqref{eq:ccov}), is a monotonically increasing function of both the transmittance $\eta$ and the mean thermal photon number $\bar n_B$ for all parameters $\eta \in (0,1)$ and $\bar n_B > 0$.
\end{lemma}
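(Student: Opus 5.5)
The idea is to split $c_{\text{cov}}$ from Eq.~\eqref{eq:ccov} into a numerator and a denominator, show each behaves monotonically on the open domain $\eta\in(0,1)$, $\bar n_B>0$, and then combine. Set $x=\eta\bar n_B>0$, so that
\[
c_{\text{cov}}(\eta,\bar n_B)=\frac{\sqrt{2g(x)}}{1-\eta},\qquad g(x)=x(1+x)=x+x^2 .
\]
On $x>0$ we have $g'(x)=1+2x>0$ and $g(x)>0$. So $g$ is strictly increasing and positive, and the square root is well defined and strictly increasing on the relevant range. The numerator therefore increases whenever $x$ increases. Since $\partial x/\partial\bar n_B=\eta>0$ and $\partial x/\partial\eta=\bar n_B>0$, $x$ increases in both variables.

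Monotonicity in $\bar n_B$ is then immediate. For fixed $\eta$, the denominator $1-\eta>0$ is a positive constant, so $c_{\text{cov}}$ is strictly increasing in $\bar n_B$. To make this explicit, differentiate:
\[
\frac{\partial c_{\text{cov}}}{\partial \bar n_B}=\frac{\eta(1+2x)}{(1-\eta)\sqrt{2x(1+x)}}>0 .
\]

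For $\eta$, both factors move the same way. The numerator $\sqrt{2g(\eta\bar n_B)}$ is positive and strictly increasing in $\eta$. The factor $1/(1-\eta)$ is positive and strictly increasing on $(0,1)$, with derivative $1/(1-\eta)^2>0$. A product of two positive, strictly increasing functions is strictly increasing. Explicitly,
\[
\frac{\partial c_{\text{cov}}}{\partial\eta}=\frac{\bar n_B(1+2x)}{(1-\eta)\sqrt{2x(1+x)}}+\frac{\sqrt{2x(1+x)}}{(1-\eta)^2},
\]
and both terms are strictly positive.

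No step is a real obstacle here. The only care needed is the domain: we need strict positivity of $x$, so the square root is differentiable, and $\eta<1$, so the denominator is nonzero and positive. Both follow directly from the hypotheses $\eta\in(0,1)$ and $\bar n_B>0$.

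I would also remark on what this gives for the robust design that follows. By the lemma, the smallest covertness constant over $\mathcal{U}$ is attained at the corner $(\eta_{\min},\bar n_{B,\min})$. This is the realization most favorable to Willie, which fixes the robust covertness constraint.
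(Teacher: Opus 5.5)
Your proof is correct and follows essentially the same route as the paper: both show that the two partial derivatives of $c_{\mathrm{cov}}$ are strictly positive on $\eta\in(0,1)$, $\bar n_B>0$, and your expressions agree with the paper's. The only difference is cosmetic. You leave $\partial c_{\mathrm{cov}}/\partial\eta$ as a sum of two positive product-rule terms (equivalently, a product of two positive increasing factors), whereas the paper combines them into the single fraction $\frac{\sqrt{2}\,\bar n_B(1+\eta+2\eta\bar n_B)}{2(1-\eta)^2\sqrt{\eta\bar n_B(1+\eta\bar n_B)}}$.
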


\begin{proof}
    See Appendix~\ref{app:proofs-ccov}.
\end{proof}

\noindent\emph{Takeaway.} Higher $\eta$ and higher $\bar n_B$ both make covertness easier (larger $c_{\rm cov}$).

\begin{corollary}[Robust Covertness Constant]
\label{cor:robust_covertness}
To guarantee covertness for all channel parameters in the uncertainty set $\mathcal{U}$, where $\eta \in [\eta_{\min}, \eta_{\max}]$ and $\bar n_B \in [\bar n_{B,\min}, \bar n_{B,\max}]$, the transmission strategy must be based on the robust covertness constant $c_{\mathrm{cov}}^{\mathrm{robust}}$, defined as the minimum value of $c_{\mathrm{cov}}(\eta, \bar n_B)$:
\begin{equation}
\begin{split}    
    c_{\mathrm{cov}}^{\mathrm{robust}} = c_{\mathrm{cov}}(\eta_{\min}, \bar n_{B,\min}) =\\
    \frac{\sqrt{2\eta_{\min}\bar n_{B,\min}(1+\eta_{\min}\bar n_{B,\min})}}{1-\eta_{\min}}.    
\end{split}
\end{equation}
\end{corollary}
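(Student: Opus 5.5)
The plan is to reduce the corollary to Lemma~\ref{lemma:ccov} together with the structure of the square-root-law constraint $q \le 2\delta\, c_{\mathrm{cov}}/\sqrt{n}$. There are two steps. First, show that a single transmission probability $q$ is covert for every realization in $\mathcal{U}$ if and only if $q \le 2\delta \min_{(\eta,\bar n_B)\in\mathcal{U}} c_{\mathrm{cov}}(\eta,\bar n_B)/\sqrt{n}$. Second, identify where that minimum is attained.

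For the first step, I would fix $q$ and $n$. By the per-realization analysis leading to Eq.~\eqref{eq:ccov}, the QRE condition Eq.~\eqref{eq:qre_bound} holds at a given $(\eta,\bar n_B)$ (to leading order in the small-$q$ regime) exactly when
\[
n q^2 \frac{(1-\eta)^2}{\eta\bar n_B(1+\eta\bar n_B)} \le 8\delta^2 ,
\]
which is equivalent to $q\sqrt{n} \le 2\delta\, c_{\mathrm{cov}}(\eta,\bar n_B)$. Alice does not know the realization, and Nature may choose adversarially. Requiring this inequality simultaneously for all points of $\mathcal{U}$ is therefore equivalent to requiring it for the infimum of $c_{\mathrm{cov}}$ over $\mathcal{U}$.

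Sufficiency follows because $c_{\mathrm{cov}}(\eta,\bar n_B)\ge c_{\mathrm{cov}}^{\mathrm{robust}}$ at every point of $\mathcal{U}$. Necessity, the ``must'' in the statement, follows by exhibiting the realization attaining the minimum. At that point, any larger $q$ violates the sufficient QRE condition and hence the certified covertness guarantee.

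For the second step, I would apply Lemma~\ref{lemma:ccov}. Since $c_{\mathrm{cov}}$ is nondecreasing in $\eta$ for each fixed $\bar n_B$, and nondecreasing in $\bar n_B$ for each fixed $\eta$, any $(\eta,\bar n_B)\in\mathcal{U}$ satisfies the chain
\[
c_{\mathrm{cov}}(\eta,\bar n_B)\ge c_{\mathrm{cov}}(\eta_{\min},\bar n_B)\ge c_{\mathrm{cov}}(\eta_{\min},\bar n_{B,\min}).
\]
Because $(\eta_{\min},\bar n_{B,\min})\in\mathcal{U}$, the minimum is attained at this corner. Substituting into Eq.~\eqref{eq:ccov} gives the displayed closed form.

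I need $\eta_{\min}>0$, $\eta_{\max}<1$ and $\bar n_{B,\min}>0$ so that the lemma applies and $c_{\mathrm{cov}}$ is finite and positive on the box. I would state this as a standing assumption on $\mathcal{U}$.

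The main obstacle is not the monotonicity argument, which is immediate, but the first step's interpretation: the covertness condition is an approximate, leading-order sufficient condition rather than an exact characterization. I would handle this by phrasing ``must'' relative to the adopted $\chi^2$/Pinsker design rule, consistent with the paper's stated asymptotic framework.
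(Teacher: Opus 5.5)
Your proposal is correct and follows the paper's proof, which is exactly your second step: by Lemma~\ref{lemma:ccov}, $c_{\mathrm{cov}}$ is increasing in both arguments, so its minimum over the box is attained at $(\eta_{\min},\bar n_{B,\min})$. Your first step, which reduces uniform covertness over $\mathcal{U}$ to $q\le 2\delta\min_{\mathcal{U}}c_{\mathrm{cov}}/\sqrt{n}$, together with your caveat that the ``must'' is relative to the adopted $\chi^2$/Pinsker sufficient condition, only makes explicit what the paper leaves implicit in the corollary's wording and in the proof of Theorem~\ref{thm:main_result}.
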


\begin{proof}
Since $c_{\mathrm{cov}}$ increases in both arguments (Lemma~\ref{lemma:ccov}),
its minimum over
$[\eta_{\min},\eta_{\max}]\times[\bar n_{B,\min},\bar n_{B,\max}]$
occurs at $(\eta_{\min},\bar n_{B,\min})$.
\end{proof}

\noindent\emph{Takeaway.} The robust design is governed by Willie’s most favorable channel realization, namely $(\eta_{\min},\bar n_{B,\min})$.

\begin{lemma} \label{lemma:p}
The depolarizing parameter, $p$, is a monotonically increasing function of the mean thermal photon number $\bar n_B$ and a monotonically decreasing function of the transmittance $\eta$ for all valid parameters.
\end{lemma}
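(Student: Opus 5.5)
We work directly from the closed form in Eq.~\eqref{eq:p}, $p = 1 - \eta/D(\eta,\bar n_B)$ with $D(\eta,\bar n_B) = \bigl(1+(1-\eta)\bar n_B\bigr)^4$. Since $p$ is one minus the ratio $g(\eta,\bar n_B)=\eta/D$, the lemma is equivalent to two claims about $g$: it decreases in $\bar n_B$ and increases in $\eta$. We treat each partial derivative separately on the domain $\eta\in(0,1)$, $\bar n_B\ge 0$. Throughout, $1+(1-\eta)\bar n_B\ge 1>0$, so $D$ is smooth and strictly positive.

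\emph{Dependence on $\bar n_B$.} For fixed $\eta\in(0,1)$, the base $1+(1-\eta)\bar n_B$ is strictly increasing in $\bar n_B$ because its slope $1-\eta$ is positive. Raising to the fourth power preserves this, since the base is positive. Hence $D$ is strictly increasing and $\eta/D$ is strictly decreasing in $\bar n_B$. Explicitly,
\[
\frac{\partial p}{\partial \bar n_B} = \frac{4\eta(1-\eta)}{\bigl(1+(1-\eta)\bar n_B\bigr)^5} > 0 .
\]

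\emph{Dependence on $\eta$.} Here the numerator $\eta$ increases, and the denominator decreases because $1-\eta$ shrinks. Both effects push $g$ upward, so the sign is unambiguous, but we verify it by differentiating:
\[
\frac{\partial g}{\partial \eta} = \frac{1}{D} + \frac{4\eta\,\bar n_B}{\bigl(1+(1-\eta)\bar n_B\bigr)^5} > 0 ,
\]
so $\partial p/\partial\eta<0$. The same conclusion follows by writing $\log g = \log\eta - 4\log\bigl(1+(1-\eta)\bar n_B\bigr)$ and noting that both terms increase in $\eta$.

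No step is a genuine obstacle; the argument is elementary calculus. The only care needed is bookkeeping. First, confirm that the base $1+(1-\eta)\bar n_B$ stays positive, which justifies the monotonicity of the fourth power. Second, note the boundary case $\bar n_B=0$: there the $\bar n_B$-derivative is still positive, while in $\eta$ we get $p=1-\eta$, which is still strictly decreasing.

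Finally, we remark (without needing it) a corollary paralleling Corollary~\ref{cor:robust_covertness}. The worst case for reliability is $p_{\max}=p(\eta_{\min},\bar n_{B,\max})$. This requires that $R$ be decreasing in $p$, which holds for $p$ in the regime where $H(\vec p)<1$ via the entropy's monotonicity on $[0,1]$.
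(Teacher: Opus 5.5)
Your proof is correct and follows the paper's route: write $p = 1 - f$ with $f = \eta/(1+(1-\eta)\bar n_B)^4$ and check the signs of the two partial derivatives. Your $\partial g/\partial\eta = 1/D + 4\eta\bar n_B/(1+(1-\eta)\bar n_B)^5$ is the unsimplified form of the paper's $(1+\bar n_B(1+3\eta))/(1+(1-\eta)\bar n_B)^5$, and your extra checks (positivity of the base, the $\bar n_B=0$ boundary, the log-derivative shortcut) are valid but go beyond what the paper needs.
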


\begin{proof}
See Appendix~\ref{app:proofs-p}.
\end{proof}

\noindent\emph{Takeaway.} Reliability degrades with higher noise and lower transmittance.

\begin{corollary}[Worst-Case Depolarizing Parameter]
\label{cor:worst_case_p}
To guarantee a reliable communication rate $R$ for all channel parameters within the uncertainty region $\eta \in [\eta_{\min}, \eta_{\max}]$ and $\bar n_B \in [\bar n_{B,\min}, \bar n_{B,\max}]$, the rate must be calculated from the worst-case depolarizing parameter, $p_{\mathrm{worst}}$. This parameter corresponds to the maximum value of $p(\eta, \bar n_B)$ over the uncertainty region and is given by:
\begin{equation}
    p_{\mathrm{worst}} = p(\eta_{\min}, \bar n_{B,\max}) = 1 - \frac{\eta_{\min}}{\left(1 + (1 - \eta_{\min})\bar n_{B,\max}\right)^4}.
\end{equation}
\end{corollary}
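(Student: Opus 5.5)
The plan is to mirror the proof of Corollary~\ref{cor:robust_covertness}, using Lemma~\ref{lemma:p} in place of Lemma~\ref{lemma:ccov}. There are two steps: first locate the maximizer of $p$ over the box $\mathcal{U}$, and then argue that the rate $R$ from Eq.~\eqref{eq:comm_rate}, viewed as a function of $p$, is non-increasing. Together these show that evaluating at $p_{\mathrm{worst}}$ gives a rate that every admissible realization can support.

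\emph{Locating the maximizer.} Fix any $(\eta,\bar n_B)\in\mathcal{U}$. Lemma~\ref{lemma:p} says $p$ is increasing in $\bar n_B$, so $p(\eta,\bar n_B)\le p(\eta,\bar n_{B,\max})$. The same lemma says $p$ is decreasing in $\eta$, so $p(\eta,\bar n_{B,\max})\le p(\eta_{\min},\bar n_{B,\max})$. Chaining the two inequalities shows that the maximum over the rectangle is attained at the corner $(\eta_{\min},\bar n_{B,\max})$. Substituting this corner into Eq.~\eqref{eq:p} gives the displayed closed form for $p_{\mathrm{worst}}$.

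\emph{Monotonicity of $R$ in $p$.} This step is needed to justify the phrase ``must be calculated from'', and I expect it to be the only real obstacle. Write
\[
H(\vec p)=-(1-\tfrac{3p}{4})\log_2(1-\tfrac{3p}{4})-\tfrac{3p}{4}\log_2\tfrac{p}{4}.
\]
Differentiating gives
\[
\frac{dH}{dp}=\tfrac34\log_2\frac{1-3p/4}{p/4}=\tfrac34\log_2\frac{4-3p}{p}.
\]
This derivative is nonnegative exactly when $p\le 1$, which always holds because Eq.~\eqref{eq:p} gives $p\in[0,1)$ for $\eta\in(0,1)$ and $\bar n_B\ge 0$. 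Hence $H$ is nondecreasing on the physical range of $p$. It follows that $1-H$, and therefore $[1-H]^+$, is nonincreasing in $p$.

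\emph{Conclusion.} Combining the two steps, for every realization in $\mathcal{U}$ we have $R(\eta,\bar n_B)\ge R(p_{\mathrm{worst}})$. So a single code designed for $p_{\mathrm{worst}}$ achieves this rate uniformly over the whole set. Conversely, the worst-case realization $(\eta_{\min},\bar n_{B,\max})$ lies in $\mathcal{U}$, so any rate claimed uniformly cannot exceed $R(p_{\mathrm{worst}})$ under this effective model. The only thing left to check is that the corner is admissible: $\eta_{\min}\in(0,1)$, which keeps $p$ in the range where the monotonicity argument holds.
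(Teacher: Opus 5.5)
Your proposal is correct and takes the same approach as the paper, whose entire proof is your first step: Lemma~\ref{lemma:p} ($p$ increasing in $\bar n_B$, decreasing in $\eta$) places the maximum of $p$ over the box at $(\eta_{\min},\bar n_{B,\max})$, and your chaining of the two one-variable inequalities just makes that explicit. Your extra computation $\frac{dH}{dp}=\frac34\log_2\frac{4-3p}{p}\ge 0$ on $(0,1]$ is correct and slightly strengthens the paper, which does not argue monotonicity of $R$ in $p$ here and only asserts it later, on the positive-rate interval, in the proof of Proposition~\ref{prop:baseline_fails}; note, though, that your ``So'' does not actually follow, since pointwise monotonicity of the rate function alone does not show that one code works uniformly over $\mathcal{U}$, so that step should be stated as a modeling assumption (one the paper also makes) rather than as a consequence.
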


\begin{proof}
By Lemma~\ref{lemma:p}, $p$ increases in $\bar n_B$ and decreases in $\eta$, so its maximum over the uncertainty set occurs at $(\eta_{\min},\bar n_{B,\max})$.
\end{proof}

\noindent\emph{Takeaway.} Compute $R$ using $(\eta_{\min},\bar n_{B,\max})$.

\begin{theorem}[Robust Policy and Guaranteed Payload Bound]\label{thm:main_result}
Let
\[
\mathcal{U} = [\eta_{\min},\eta_{\max}] \times [\bar n_{B,\min},\bar n_{B,\max}]
\]
denote the compound uncertainty set. Define the robust transmission probability and robust coding rate by
\[
q_{\mathrm{rob}} \triangleq \frac{2\delta\, c_{\mathrm{cov}}^{\mathrm{robust}}}{\sqrt{n}},
\qquad
R_{\mathrm{rob}} \triangleq R_{\mathrm{worst}},
\]
where $c_{\mathrm{cov}}^{\mathrm{robust}} = c_{\mathrm{cov}}(\eta_{\min},\bar n_{B,\min})$, $R_{\mathrm{worst}} = \bigl[1-H(\vec p_{\mathrm{worst}})\bigr]^+$, and
$p_{\mathrm{worst}} = p(\eta_{\min},\bar n_{B,\max})$.

Then the policy $(q_{\mathrm{rob}},R_{\mathrm{rob}})$ satisfies the covertness and reliability constraints for every $(\eta,\bar n_B)\in\mathcal{U}$. Consequently, the guaranteed worst-case expected covert payload satisfies
\begin{equation}\label{eq:main_result}
\begin{aligned}
M_{\mathrm{rob}}(n)
&\triangleq \inf_{(\eta,\bar n_B)\in\mathcal{U}}
\mathbb{E}[M(n)\mid q_{\mathrm{rob}},R_{\mathrm{rob}}]
\\
&\ge
2\sqrt{n}\, c_{\mathrm{cov}}^{\mathrm{robust}}\, R_{\mathrm{worst}}\, \delta.
\end{aligned}
\end{equation}
\end{theorem}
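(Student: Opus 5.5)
The proof splits into three parts: covertness of the policy over all of $\mathcal{U}$, reliability over all of $\mathcal{U}$, and then a computation of the expected payload at each admissible realization followed by taking the infimum. All three parts rest on the monotonicity results (Lemma~\ref{lemma:ccov}, Lemma~\ref{lemma:p}) and their corollaries. No new analytic estimate is needed; the task is to put the pointwise statements together correctly.

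\emph{Covertness.} Fix any $(\eta,\bar n_B)\in\mathcal{U}$. Corollary~\ref{cor:robust_covertness} gives $c_{\mathrm{cov}}^{\mathrm{robust}}\le c_{\mathrm{cov}}(\eta,\bar n_B)$. Hence
\[
q_{\mathrm{rob}}=\frac{2\delta c_{\mathrm{cov}}^{\mathrm{robust}}}{\sqrt n}\le \frac{2\delta c_{\mathrm{cov}}(\eta,\bar n_B)}{\sqrt n},
\]
which is exactly the known-channel square-root-law condition of Section~\ref{sec:model}. Reversing the chain used there, this inequality is equivalent to
\[
n q_{\mathrm{rob}}^2 \frac{(1-\eta)^2}{\eta\bar n_B(1+\eta\bar n_B)}\lesssim 8\delta^2.
\]
Through the $\chi^2$ bound and additivity, this yields $D(\hat\rho_1^W\|(\hat\rho_0^W)^{\otimes n})\le 8\delta^2$. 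Pinsker and Helstrom then give $P_e\ge \tfrac12-\delta$. Since the realization was arbitrary, covertness holds uniformly on $\mathcal{U}$.

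\emph{Reliability.} Fix a realization again. Corollary~\ref{cor:worst_case_p} gives $p(\eta,\bar n_B)\le p_{\mathrm{worst}}$. I would then show that $R(p)=[1-H(\vec p(p))]^+$ is nonincreasing in $p$ on $[0,1]$. Differentiating,
\[
\frac{d}{dp}H=\frac34\log_2\frac{4-3p}{p}.
\]
This derivative is $\ge 0$ for $p\le 1$, so $H$ is increasing there. It follows that $R(\eta,\bar n_B)\ge R_{\mathrm{worst}}=R_{\mathrm{rob}}$. Consequently a code of rate $R_{\mathrm{rob}}$ on the effective depolarizing channel is supported by the hashing bound at every realization. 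The effective channel at any realization is no noisier, in the depolarizing parameter, than the worst one. A universal code designed for $p_{\mathrm{worst}}$ therefore works for all realizations, by degradation: a depolarizing channel with larger $p$ is the composition of the one with smaller $p$ and a further depolarizing map.

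This universality step is the main obstacle. The rigorous compound-channel coding statement, in the sense of Bjelaković--Boche, is not proved in the paper. My plan is to argue it through the degradation ordering of depolarizing channels and keep the statement at the level of the hashing-bound model adopted in the paper.

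\emph{Payload.} In the last step, the number of transmitting slots is Binomial$(n,q_{\mathrm{rob}})$, with mean $nq_{\mathrm{rob}}$. Each transmitted slot carries $R_{\mathrm{rob}}$ qubits, so
\[
\mathbb{E}[M(n)]\ge nq_{\mathrm{rob}}R_{\mathrm{worst}}=2\sqrt n\,c_{\mathrm{cov}}^{\mathrm{robust}}R_{\mathrm{worst}}\delta.
\]
This holds for every realization, and taking the infimum over $\mathcal{U}$ gives Eq.~\eqref{eq:main_result}.
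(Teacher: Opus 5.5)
Your proposal is correct and follows essentially the paper's route. Pointwise covertness comes from Corollary~\ref{cor:robust_covertness} and pointwise reliability from Corollary~\ref{cor:worst_case_p}; then $\mathbb{E}[M(n)\mid q_{\mathrm{rob}},R_{\mathrm{rob}}]=nq_{\mathrm{rob}}R_{\mathrm{worst}}$ is independent of $(\eta,\bar n_B)$, so the infimum is immediate. Your computation $\frac{d}{dp}H=\frac34\log_2\frac{4-3p}{p}\ge 0$ usefully makes explicit the monotonicity of the hashing rate in $p$, which the paper's proof uses tacitly and only asserts, in the proof of Proposition~\ref{prop:baseline_fails}.

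The universality step you flag goes beyond what the paper proves, since the paper treats reliability only as the pointwise inequality $R_{\mathrm{rob}}\le[1-H(\vec p(\eta,\bar n_B))]^+$. If you pursue it, note that degradation alone yields a decoder that depends on the unknown $p$: Bob must first apply a depolarizing map of parameter $r$ with $1-r=(1-p_{\mathrm{worst}})/(1-p)$. A single code for all of $\mathcal{U}$ therefore needs a compound coding theorem, or a $p$-independent decoder such as bounded-weight syndrome decoding.
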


\begin{proof}
    See Appendix~\ref{app:Proof_thm1}.
\end{proof}

Within the class of static, channel-independent $(q,R)$ policies that satisfy the covertness and reliability constraints uniformly over $\mathcal{U}$, the policy $(q_{\mathrm{rob}},R_{\mathrm{rob}})$ is optimal for maximizing the guaranteed worst-case expected covert payload. Indeed, any such feasible policy must satisfy $q\le q_{\mathrm{rob}}$ and $R\le R_{\mathrm{worst}}$, while for a fixed static policy the guaranteed payload is $\inf_{(\eta,\bar n_B)\in\mathcal{U}}\mathbb{E}[M(n)\mid q,R]=nqR$, which is increasing in both variables.

In other words, Eq.~\eqref{eq:main_result} gives a guaranteed lower bound on the payload attained by the robust policy that saturates the worst-case covertness constraint and uses the worst-case reliably achievable rate. In practice, the relative errors on transmittance and thermal background need not be symmetric around $(\eta_0,\bar n_{B,0})$. We therefore generalize the rectangular uncertainty set to \emph{asymmetric} boxes
\begin{equation}
\begin{aligned}
\eta \in \big[(1-a)\eta_0,\ \min\{(1+b)\eta_0,1\}\big], \\
\bar n_B \in \big[(1-c)\bar n_{B,0},\ (1+d)\bar n_{B,0}\big].
\end{aligned}
\label{eq:asym_boxes}
\end{equation}
with $a,b,c,d \in [0,1)$ describing relative under/overestimation margins, and write
\[
\eta_{\min}=(1-a)\eta_0,\;\; \eta_{\max}=\min\{(1+b)\eta_0,1\},
\]
\[
\bar n_{B,\min}=(1-c)\bar n_{B,0},\;\; \bar n_{B,\max}=(1+d)\bar n_{B,0}.
\]

\begin{remark}[Asymmetric uncertainty boxes]
\label{rem:asym_corners}
For asymmetric margins $\eta \in \big[(1-a)\eta_0,\ \min\{(1+b)\eta_0,1\}\big]$ and 
$\bar n_B \in [(1-c)\bar n_{B,0},(1+d)\bar n_{B,0}]$ with $a,b,c,d\in[0,1)$,
the corner-extremizer conclusions are unchanged:
$c_{\rm cov}$ is minimized at $(\eta_{\min},\bar n_{B,\min})$ and
$p$ is maximized at $(\eta_{\min},\bar n_{B,\max})$. Therefore, Theorem~\ref{thm:main_result} holds unchanged with these endpoints.
\emph{This follows immediately from Lemmas~\ref{lemma:ccov} and~\ref{lemma:p}.}

\emph{Why this matters:} Channel errors are rarely symmetric. The $(a,b,c,d)$ model captures practical skew (e.g., upward noise drift). Since the corner-extremizer property still holds, the design rules remain unchanged (see Fig.~\ref{fig:asym-vs-sym}).
\end{remark}

\section{Analytical and Numerical Results}
\label{sec:discussion}
\subsection{Nominal vs.\ Robust Design}
When designing a covert quantum communication system, Alice must satisfy both covertness and reliability, but these constraints are governed by different channel conditions and impose conflicting design requirements. Covertness is dictated by the Alice-to-Willie channel and is hardest to guarantee under conditions most favorable to Willie’s detection, namely, low environmental noise and low transmittance. Reliability depends on the Alice-to-Bob channel and is hardest to guarantee under poor reception conditions, namely, low transmittance and high noise. Hence, the worst cases for covertness and reliability occur at different corners of the uncertainty region: both are governed by $\eta_{\min}$, but covertness is paired with $\bar n_{B,\min}$ whereas reliability is paired with $\bar n_{B,\max}$. At first glance, this appears contradictory: How can a single transmission strategy handle both low- and high-noise environments? The resolution lies in the separation of constraints: each is evaluated on a distinct path, i.e., covertness on the Alice-to-Willie channel and reliability on the Alice-to-Bob channel. Our robust framework synthesizes these worst cases into a single conservative strategy that simultaneously guarantees both security and performance. This is not a contradiction, but a trade-off between security and reliability, and our performance bound quantifies the cost of reconciling these competing requirements.

To demonstrate the need for robustness, we contrast our design with a baseline that tunes $(q,R)$ only at the nominal channel $(\eta_0,\bar n_{B,0})$ and ignores uncertainty. Although this baseline may show high payload at the nominal point, it does
not offer a nonzero guarantee on the uncertainty set $\mathcal{U}$ defined below. The proposition formalizes this failure mode.

\begin{proposition}[Guaranteed payload collapse of the nominal baseline]\label{prop:baseline_fails}
Fix nominal parameters $0<\eta_0<1$ and $\bar n_{B,0}>0$. Let $0<u<1$ define the uncertainty level, and let $\mathcal{U}$ be the uncertainty set as in Theorem~1. Define
\begin{align}
\mathcal{U}_{\mathrm{base}}
&\triangleq
\{(\eta,\bar n_B): \eta\in[\eta_{\min},\eta_0],\;
\bar n_B\in[\bar n_{B,\min},\bar n_{B,\max}]\}
\nonumber\\
&\subseteq \mathcal{U}.
\end{align}
Consider a naive policy that selects its transmission probability $q_{\mathrm{nom}}$ and coding rate $R_{\mathrm{nom}}$ by saturating the covertness and reliability constraints at the nominal point $(\eta_0,\bar n_{B,0})$. Then this policy is not uniformly feasible over $\mathcal{U}$. Consequently, its guaranteed covert-and-reliable payload over $\mathcal{U}$ is zero.
\begin{equation}
M_{\mathrm{rob,naive}}(n) = 0.
%\tag{13}
\end{equation}
Here, $M_{\mathrm{rob,naive}}(n)$ denotes the certified worst-case covert-and-reliable payload achieved by the nominally tuned policy, with the convention that this quantity is zero whenever the policy violates either the covertness or reliability requirement for some realization in $\mathcal{U}$.
\end{proposition}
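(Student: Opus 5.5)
The plan is to exhibit a single realization in $\mathcal{U}_{\mathrm{base}}\subseteq\mathcal{U}$ at which the nominally tuned policy breaks the covertness constraint. The reliability constraint then gives a second, independent witness. I take $\mathcal{U}$ to be the box of Theorem~\ref{thm:main_result} with nontrivial uncertainty level $u$, so that $\eta_{\min}<\eta_0$ and $\bar n_{B,\min}<\bar n_{B,0}<\bar n_{B,\max}$ (for instance $\eta_{\min}=(1-u)\eta_0$, $\bar n_{B,\min}=(1-u)\bar n_{B,0}$, $\bar n_{B,\max}=(1+u)\bar n_{B,0}$). The naive policy is explicit:
\[
q_{\mathrm{nom}}=\frac{2\delta\,c_{\mathrm{cov}}(\eta_0,\bar n_{B,0})}{\sqrt n},\qquad
R_{\mathrm{nom}}=\bigl[1-H(\vec p(\eta_0,\bar n_{B,0}))\bigr]^+ .
\]

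\textbf{Covertness step.} I will invoke Lemma~\ref{lemma:ccov}. Since $c_{\mathrm{cov}}$ is increasing in both arguments, and a short derivative computation or the appendix argument gives strict increase, the corner $(\eta_{\min},\bar n_{B,\min})\in\mathcal{U}_{\mathrm{base}}$ satisfies $c_{\mathrm{cov}}(\eta_{\min},\bar n_{B,\min})<c_{\mathrm{cov}}(\eta_0,\bar n_{B,0})$. At this realization the admissible transmission probability is $2\delta c_{\mathrm{cov}}(\eta_{\min},\bar n_{B,\min})/\sqrt n<q_{\mathrm{nom}}$. Under the paper's leading-order model, the per-frame QRE is $nq_{\mathrm{nom}}^2(1-\eta)^2/(\eta\bar n_B(1+\eta\bar n_B))$, which at this corner equals
\[
8\delta^2\,\frac{c_{\mathrm{cov}}(\eta_0,\bar n_{B,0})^2}{c_{\mathrm{cov}}(\eta_{\min},\bar n_{B,\min})^2}>8\delta^2 .
\]
So the covertness requirement \eqref{eq:qre_bound} fails there.

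\textbf{Reliability step.} By Lemma~\ref{lemma:p}, strict monotonicity gives $p(\eta_{\min},\bar n_{B,\max})>p(\eta_0,\bar n_{B,0})$. Because the binary-like entropy $H(\vec p)$ of $[1-3p/4,p/4,p/4,p/4]$ is strictly increasing in $p$ on $[0,1]$, where the derivative $\tfrac34\log_2\frac{1-3p/4}{p/4}$ is positive for $p<1$, the hashing rate at that corner is strictly below $R_{\mathrm{nom}}$ whenever $R_{\mathrm{nom}}>0$. Coding at $R_{\mathrm{nom}}$ therefore exceeds the reliably achievable rate there. If instead $R_{\mathrm{nom}}=0$, the payload $nqR$ is zero trivially.

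Either violation means the policy is not uniformly feasible over $\mathcal{U}$. By the stated convention, $M_{\mathrm{rob,naive}}(n)=0$.

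\textbf{Main obstacle.} The key difficulty is justifying strictness rather than mere monotonicity, since the lemmas as stated only assert "monotonically increasing". I would settle this by checking the partial derivatives directly:
\[
\partial_{\eta}\log c_{\mathrm{cov}}=\frac{1+2\eta\bar n_B}{2\eta(1+\eta\bar n_B)}+\frac{1}{1-\eta}>0,
\]
and similarly $\partial_{\bar n_B}>0$, with the analogous check for $p$. I also need to be careful that the violating corner lies in $\mathcal{U}_{\mathrm{base}}$; this only requires $\eta_{\min}\le\eta_0$ and the noise range, which hold by construction.
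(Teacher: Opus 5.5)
Your proposal is correct and follows the paper's proof essentially step for step: Lemma~\ref{lemma:ccov} places the covertness witness at the corner $(\eta_{\min},\bar n_{B,\min})\in\mathcal{U}_{\mathrm{base}}$, where $q_{\mathrm{nom}}$ exceeds the admissible value, and Lemma~\ref{lemma:p} together with monotonicity of the hashing rate gives a second witness at $(\eta_{\min},\bar n_{B,\max})$. Your explicit strictness checks and your treatment of the case $R_{\mathrm{nom}}=0$ are small refinements, not a different route; in that case the paper's strict inequality between the two rates fails, but the covertness witness alone already forces $M_{\mathrm{rob,naive}}(n)=0$.
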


\begin{proof}
    See Appendix~\ref{app:Proofs-baseline_fails}
\end{proof}

Fig.~\ref{fig:baseline_vs_robust} illustrates the behavior predicted by Proposition~\ref{prop:baseline_fails}.
Throughout the figure, we fix the relative uncertainty level to $5\%$ (i.e., $u=0.05$) and use the nominal pair $(\eta_0,\bar n_{B,0})$ in Table~\ref{tab:params}.
By \emph{naive (scheduled)}, we mean the payload $\mathbb{E}[M(n)]$ that the nominally tuned baseline plans to transmit when the channel equals $(\eta_0,\bar n_{B,0})$.
By \emph{robust (guaranteed)}, we mean the worst-case certified payload $M_{\mathrm{rob}}(n)$ across the uncertainty set $\mathcal{U}$, calculated from Eq.~\eqref{eq:main_result} using the covertness corner $(\eta_{\min},\bar n_{B,\min})$ and the reliability corner $(\eta_{\min},\bar n_{B,\max})$.
The naive (guaranteed) payload is the baseline’s worst-case value over $\mathcal{U}$ and, by Proposition~\ref{prop:baseline_fails}, is identically zero. 
The figure plots these quantities versus block length $n$: the naive curve (scheduled) reflects the optimistic nominal payload, while the robust curve (guaranteed) reports the provable worst-case certified payload $M_{\mathrm{rob}}(n)$ under uncertainty (the naive–guaranteed curve is omitted because it is zero).
For example, at $n = 10^8$, the nominal baseline would
schedule 1673.9 qubits, but its guaranteed payload under $\mathcal{U}$ is $0$. In contrast, our robust design guarantees 440.2 covert qubits.

\emph{Note.} The larger naive (scheduled) value reflects nominal tuning at $(\eta_0,\bar n_{B,0})$ and is not a guarantee over $\mathcal{U}$; the relevant comparison under uncertainty is between the \emph{guaranteed worst-case} payloads, for which the naive baseline is $0$ whereas the robust design $M_{\mathrm{rob}}(n)$ remains strictly positive.

\begin{figure}[t]
  \centering
\includegraphics[width=0.375\textwidth]{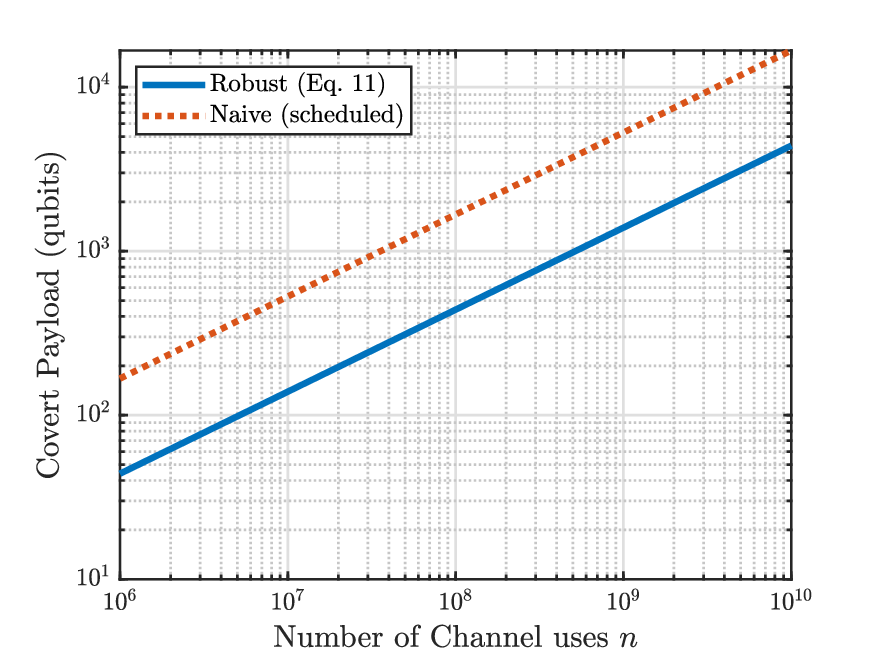}%
%\Description{Subplot (a): scheduled vs. guaranteed payloadacross block lengths under 5\% uncertainty.}%
\caption{\textbf{Naive vs.\ robust under $u=5\%$.} Nominal tuning yields zero guaranteed worst-case covert payload over $\mathcal{U}$, whereas the robust policy in \eqref{eq:main_result} yields a strictly positive guaranteed lower bound $M_{\mathrm{rob}}(n)$. The curves show the nominal covert payload and the robust guaranteed covert payload versus block length $n$. The ``Naive (guaranteed)'' curve is identically zero and is therefore omitted.}
\vspace{-5mm}
\label{fig:baseline_vs_robust}
\end{figure}
\vspace{-3mm}
\subsection{Numerical Framework and Validation}
\label{sec:numerical_framework}
To validate the covertness component of our analytical framework, we developed a QuTiP simulation of the underlying four-mode physical system using nominal parameters consistent with~\cite{anderson2024square} and summarized in Table~\ref{tab:params}. Since each bosonic mode has an infinite-dimensional Fock space, we performed a convergence analysis to determine an accurate truncation cutoff by computing the quantum $\chi^2$-divergence across dimensions (Fig.~\ref{fig:convergence_plot}, Table~\ref{tab:convergence_chi2}). The results show that a Fock basis of dimension 7 yields sufficient accuracy (relative error $<0.01\%$), ensuring higher photon-number contributions are negligible. Using this validated setup, we numerically estimate the Willie-side quantum $\chi^2$-divergence, infer the corresponding covertness constant, and combine it with the analytical reliability rate to obtain the guaranteed robust lower bound $M_{\mathrm{rob}}(n)$.

\begin{table}[!t]
\centering
\caption{\textbf{Simulation parameters.} $\eta_0=0.9$ and $\bar n_{B,0}=0.12$ reflect realistic short-haul fiber or nighttime free-space optical links with dark counts and ambient noise~\cite{bourgoin2013comprehensive}.}
\label{tab:params}
\begin{tabular}{l|c|l}
\hline
%\midrule
\textbf{Parameter} & \textbf{Symbol} & \textbf{Value} \\ \hline
Nominal Transmittance & $\eta_0$ & 0.9 \\
Nominal Mean Thermal Photon No. & $\bar n_{B,0}$ & 0.12 \\
Covertness Parameter & $\delta$ & 0.05 \\
Number of Channel Uses (variable)& $n$ & $10^6$ -- $10^{10}$ \\
\hline
\end{tabular}
\vspace{-3mm}
\end{table}

%--- Figure and Table for Convergence ---%
\begin{figure}[!t]
    \centering
    \includegraphics[width=0.375\textwidth]{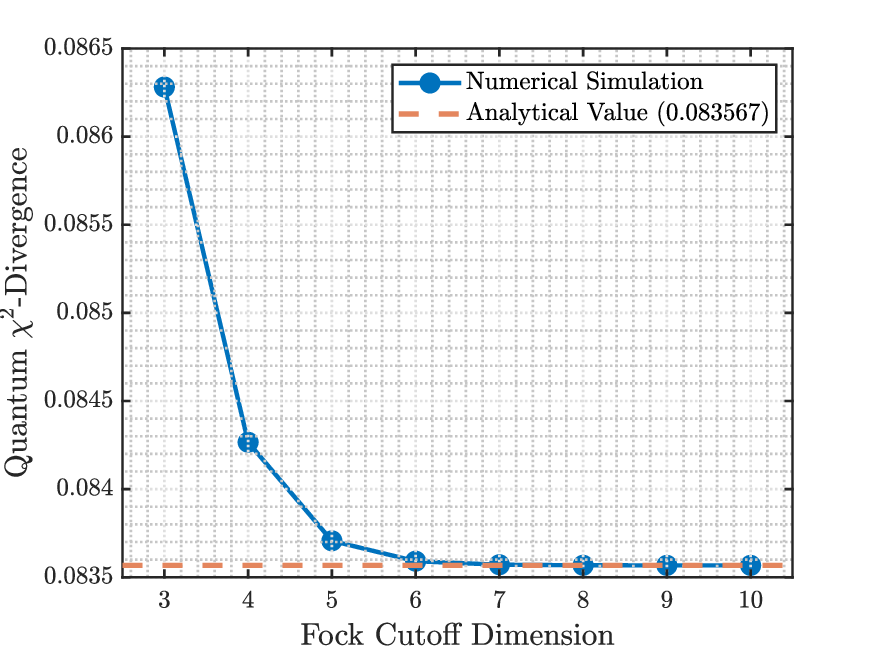}
    %\Description{Plot showing the simulated quantum chi-squared divergence versus Fock space cutoff dimension. Blue markers represent QuTiP simulation results and the red dashed line the analytical reference, demonstrating rapid convergence at cutoff dimension seven.}
    \caption{\textbf{Convergence of simulated $D_{\chi^2}$ vs.\ Fock cutoff.} $\eta=0.9,\ \bar n_B=0.12$. QuTiP markers approach the dashed analytical value, validating \texttt{cutoff\_dim}=7.
    %\textbf{Convergence of the simulated quantum $\chi^2$-divergence as a function of the Fock space cutoff dimension.} The simulation was performed with nominal channel parameters $\eta = 0.9$ and $\bar{n}_B = 0.12$. Blue markers indicate the numerical values obtained using QuTiP, while the red dashed line shows the exact analytical reference. The plot demonstrates rapid convergence and validates the use of \texttt{cutoff\_dim = 7}.}% for the main results.
    }
    \vspace{-5mm}
    \label{fig:convergence_plot}
\end{figure}

\begin{table}[!ht]
\centering
\caption{Convergence of simulated $\chi^2$-divergence vs. Fock cutoff dimension ($\eta=0.9,\ \bar n_B=0.12$). Analytical value: $0.08356732$.}
\label{tab:convergence_chi2}
% Set \tabcolsep to 0pt so tabular* has full control over spacing.
\setlength{\tabcolsep}{0pt} 
\begin{tabular*}{0.8\columnwidth}{@{\extracolsep{\fill}}cccc}
\hline
\textbf{Cutoff Dim.} & \textbf{Simulated $\boldsymbol{\chi^2}$} & \textbf{Absolute Error} & \textbf{Relative Error (\%)} \\
\hline
\hline
3  & 0.08628140 & 0.00271408 & 3.2478 \\
4  & 0.08426596 & 0.00069864 & 0.8360 \\
5  & 0.08370664 & 0.00013932 & 0.1667 \\
6  & 0.08359112 & 0.00002379 & 0.0285 \\
7  & 0.08357102 & 0.00000370 & 0.0044 \\
8  & 0.08356786 & 0.00000054 & 0.0006 \\
9  & 0.08356740 & 0.00000008 & 0.0001 \\
10 & 0.08356733 & 0.00000001 & 0.0000 \\
\hline
\end{tabular*}
\end{table}

As shown in Fig.~\ref{fig:convergence_plot} and Table~\ref{tab:convergence_chi2}, the numerical simulation matches the analytical prediction with high precision for the tested operating point. These results validate the numerical framework used in the subsequent performance evaluations.

\subsection{The Overall Impact of Channel Uncertainty}
We first investigate the impact of uncertainty on the total
covert payload. Under perfect channel knowledge, this payload is given by $\mathbb{E}[M(n)]$, whereas under bounded uncertainty the corresponding guaranteed payload is $M_{\mathrm{rob}}(n)$ from Eq.~\ref{eq:main_result}. Fig.~\ref{fig:cost_of_uncertainty} plots the expected covert payload, $\mathbb{E}[M(n)]$, as a function of the number of channel uses, $n$. The solid blue line represents the ideal scenario where Alice has perfect knowledge of the channel parameters. The dashed lines show the robust performance bound derived in Eq.~\eqref{eq:main_result} for several relative uncertainty levels (where a symmetric uncertainty level $u\in(0,1)$ means that both parameters vary within $\pm 100u\%$ of their nominal values, subject to physical admissibility, i.e., $\eta \in \bigl[(1-u)\eta_0,\ \min\{(1+u)\eta_0,1\}\bigr]$ and $\bar n_B \in \bigl[(1-u)\bar n_{B,0},\ (1+u)\bar n_{B,0}\bigr]$). As expected, there is a significant performance gap between the ideal and robust cases, which widens as uncertainty increases. This gap represents the ``cost of uncertainty'' that Alice must
pay to provide a robust security and reliability guarantee.

\begin{figure}[!t]
    \centering
        \includegraphics[width=0.375\textwidth]{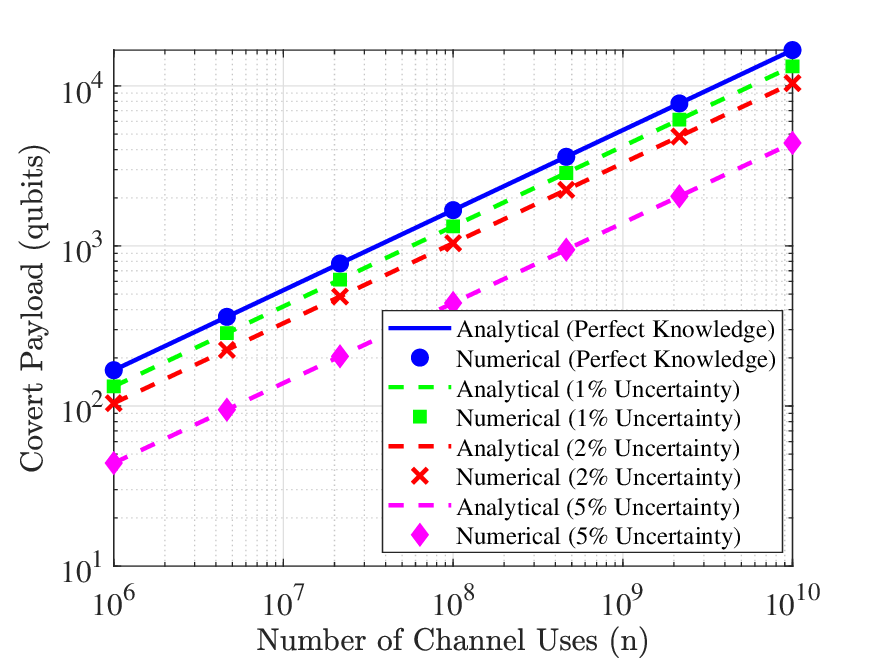}
        %\Description{Line plot showing the relationship between the number of channel uses and the guaranteed number of secure qubits. The figure compares the ideal case with perfect channel knowledge to robust security-performance bounds under 1\%, 2.5\%, and 5\% channel parameter uncertainty, illustrating how greater uncertainty reduces achievable secure qubits.}
        \caption{\textbf{Covert payload vs.\ channel uses.} Comparison between the perfect-knowledge covert payload and the guaranteed robust lower bound for 1\%, 2\%, and 5\% uncertainty. The widening gap illustrates the cost of uncertainty required to maintain worst-case guarantees.}
        \label{fig:cost_of_uncertainty}
        \vspace{-3mm}
\end{figure}
\begin{figure}
        \centering
        \includegraphics[width=0.375\textwidth]{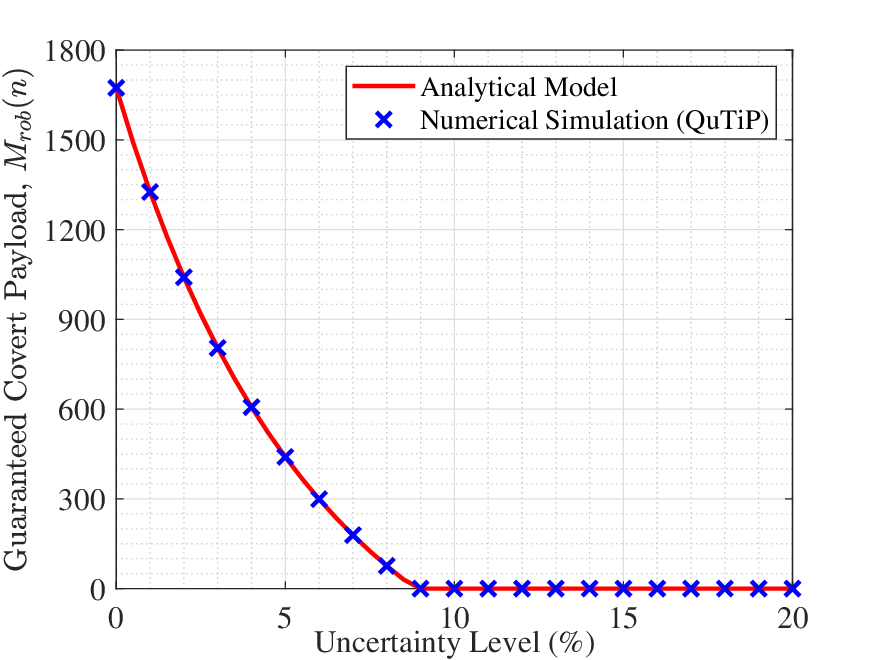}
        %\Description{Plot showing secure throughput versus channel uncertainty for a fixed block length of $n = 10^8$. The curve declines gradually and drops to zero beyond the critical uncertainty threshold of 8.85\%, illustrating a sharp boundary—known as the rate cliff edge—beyond which secure covert communication is no longer achievable.}
        \caption{\textbf{Rate cliff edge.} For $n=10^8$, guaranteed covert payload falls to zero beyond 8.85\% uncertainty, indicating a hard boundary for robust covert communication.
        %\textbf{The Rate Cliff Edge as a Hard Security Boundary.} This plot shows the degradation of the guaranteed secure throughput as channel uncertainty increases for a fixed $n=10^8$. Performance drops to zero beyond the critical uncertainty threshold of 8.85\%, demonstrating a hard limit for maintaining a secure and reliable covert channel.
        }
        \label{fig:performance_vs_uncertainty}
        \vspace{-3mm}
\end{figure}

To further explore this degradation, Fig.~\ref{fig:performance_vs_uncertainty} shows the robust performance for a fixed number of channel uses ($n=10^8$) as the uncertainty level is varied. The plot clearly illustrates that the guaranteed covert payload decreases monotonically as Alice's knowledge of the channel becomes less precise. Notably, the performance does not degrade gracefully to zero. Instead, it encounters a sharp ``cliff edge,'' beyond which the certified worst-case reliable rate under the adopted hashing-bound criterion drops to zero $(R_{\mathrm{worst}}=0)$. For our chosen parameters, this threshold occurs at a critical uncertainty level of approximately 8.85\%. This threshold effect is a fundamental feature of the adopted reliability model: once the worst-case effective channel crosses a certain noise level, the certified achievable rate given by the hashing bound vanishes.
We discuss the origin of this threshold in detail in Section~\ref{subsubsec:cliff_edge}.

\noindent\textbf{Symmetric vs.\ asymmetric uncertainty.}
Beyond the symmetric $u$–sweep, it is informative to compare the robust guarantee against a fixed \emph{asymmetric} box that reflects skewed estimation margins on $(\eta,\overline{n}_B)$.

\begin{figure}[t]
\centering
\includegraphics[width=0.375\textwidth]{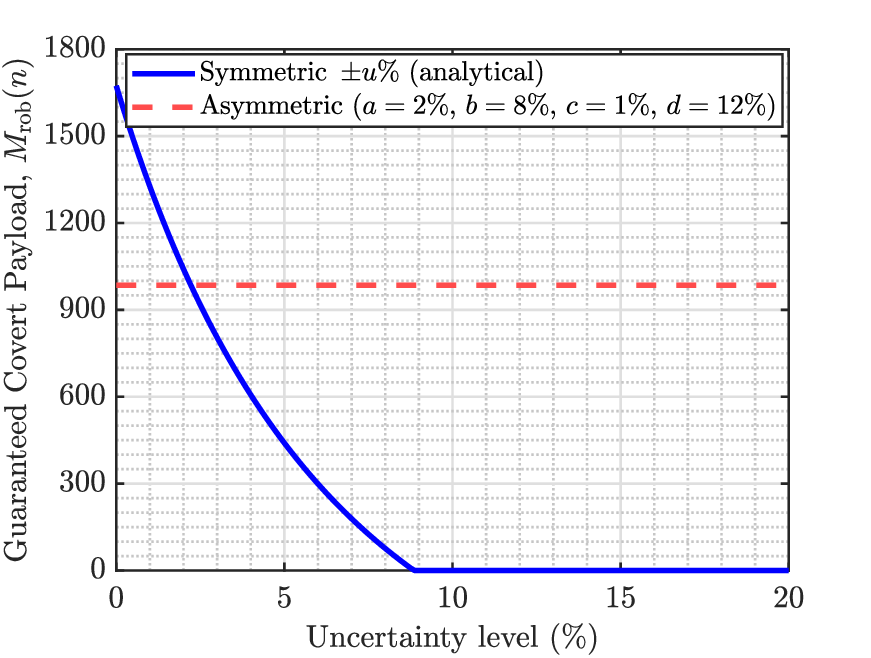}
\caption{\textbf{Symmetric vs.\ asymmetric robust bounds.} Guaranteed covert payload
$M_{\mathrm{rob}}(n)$ under symmetric uncertainty $u$ (solid) vs. an asymmetric box $(a,b,c,d)=(0.02,0.08,0.01,0.12)$ around $(\eta_0,\bar n_{B,0})=(0.9,0.12)$, with $(n,\delta)=(10^8,0.05)$. The symmetric–asymmetric intersection gives the equivalent symmetric margin.
%\textbf{Symmetric vs.\ asymmetric robust bound.} Robust $\mathbb{E}[M(n)]$ under symmetric uncertainty $u$ (solid) and a fixed asymmetric box $(a,b,c,d)=(0.02,0.08,0.01,0.12)$ at $(\eta_0,\overline{n}_{B,0})=(0.90,0.12)$ with $(n,\delta)=(10^8,0.05)$. The asymmetric guarantee is flat; the intersection gives the equivalent symmetric margin.
}
%\Description{Line plot of robust E[M(n)] versus uncertainty u. A solid curve decreases with u; a dashed horizontal line shows the fixed asymmetric guarantee. Axes: x is uncertainty level in percent, y is E[M(n)]. Parameters: eta0=0.90, nB0=0.12, n=1e8, delta=0.05; asymmetry a=2\%, b=8\%, c=1\%, d=12\%.}
\label{fig:asym-vs-sym}
\vspace{-5mm}
\end{figure}

As shown in Fig.~\ref{fig:asym-vs-sym}, the asymmetric guarantee is flat in $u$ once $(a,b,c,d)$ are fixed because the extremizers are corners (Remark~\ref{rem:asym_corners}). The symmetric guarantee $M_{\mathrm{rob}}^{\mathrm{sym}}(u)$ decays with $u$; their crossing provides an ``exchange rate'' between the chosen skew and an equivalent symmetric margin.

\begin{figure}[!b]
\centering
\includegraphics[width=0.375\textwidth]{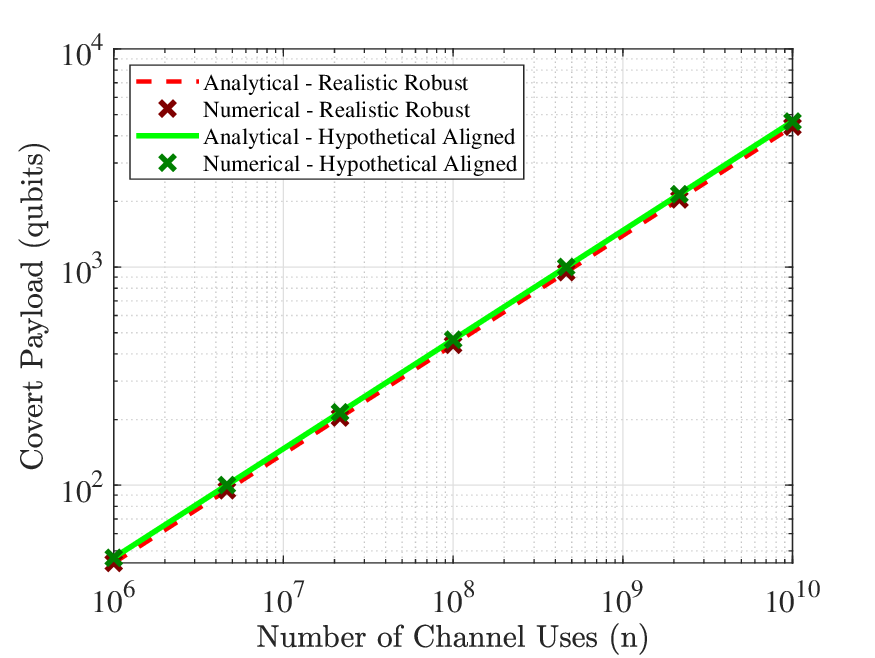}
        \caption{\textbf{Cost of the covertness--reliability conflict.} At $u=5\%$ uncertainty, the realistic robust payload $M_{\mathrm{rob}}(n)$ is compared with a hypothetical aligned bound in which both covertness and reliability are artificially evaluated at the same channel realization, chosen as $(\eta_{\min},\bar n_{B,\max})$. The gap isolates the performance loss caused by the mismatch between the true covertness extremizer $(\eta_{\min},\bar n_{B,\min})$ and the true reliability extremizer $(\eta_{\min},\bar n_{B,\max})$.}
\label{fig:conflict_decomposition}
\end{figure}

\vspace{-3mm}
\subsection{Quantifying the Cost of Constraint Conflict}

Having established that uncertainty degrades performance, we now isolate and quantify the specific penalty that arises from the conflicting nature of the covertness and reliability constraints. Fig.~\ref{fig:conflict_decomposition} decomposes the performance penalty for a 5\% uncertainty level. It compares our realistic robust bound against a hypothetical aligned bound in which both covertness and reliability are artificially evaluated at the same channel realization, chosen here as the reliability corner $(\eta_{\min},\bar n_{B,\max})$. In other words, the aligned comparator removes the mismatch between the two extremizers by forcing both constraints to be evaluated at Bob's worst-case corner, even though the true robust covertness constraint is governed by $(\eta_{\min},\bar n_{B,\min})$. The gap between these two curves reveals a performance loss of 5.32\% that is attributable solely to this extremizer mismatch. This demonstrates that a substantial portion of the overall performance degradation is a direct consequence of the system being pulled in two different directions by the requirements of security and reliability.

\begin{figure}[!t]
\centering
\includegraphics[width=0.375\textwidth]{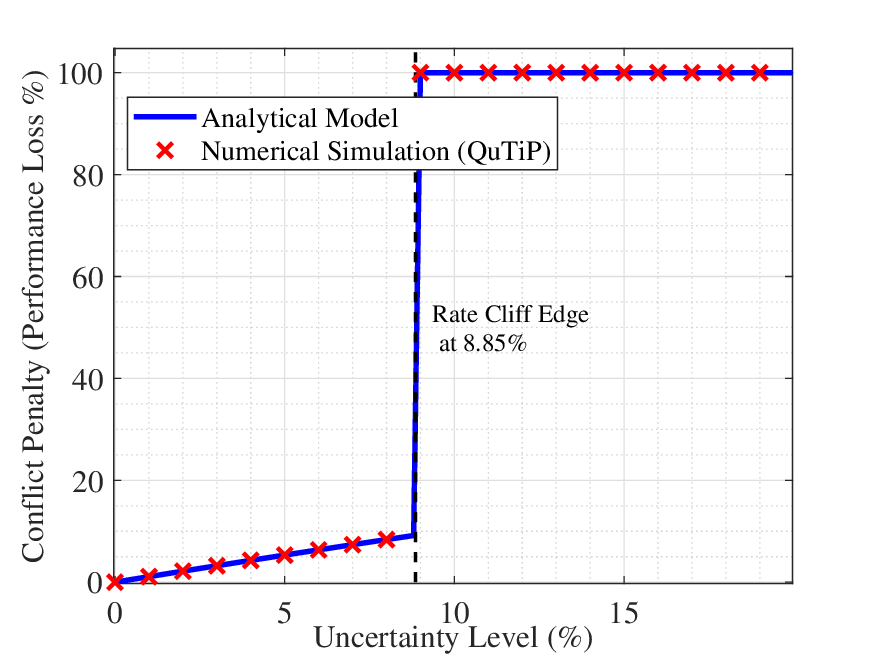}
        %\Description{Plot showing the percentage of performance lost due to the security–reliability conflict versus channel uncertainty. The security tax grows with uncertainty and leads to a complete loss of the secure channel near the rate cliff edge.}
        \caption{\textbf{Security tax vs.\ uncertainty.} Percentage performance loss attributable to the covertness--reliability conflict. The security tax grows with uncertainty up to the rate-cliff edge. Beyond the cliff, the 100\% level is shown as a plotting convention to indicate complete collapse of feasible guaranteed covert communication under the adopted worst-case criterion, rather than as a literal ratio between two strictly positive payloads.}       
        \label{fig:penalty_vs_uncertainty}
        \vspace{-5mm}
\end{figure}

Finally, Fig.~\ref{fig:penalty_vs_uncertainty} quantifies this
security tax as a function of the uncertainty level. The plot shows
that the percentage performance loss attributable to the
covertness--reliability conflict grows as the uncertainty increases.
This loss grows steadily until it reaches the critical uncertainty threshold (the ``rate cliff edge''), at which point the guaranteed covert payload drops to zero. Beyond this threshold, feasible guaranteed covert communication is lost under the adopted worst-case criterion. Accordingly, the post-cliff portion of Fig.~\ref{fig:penalty_vs_uncertainty} is displayed as a plotting convention indicating complete collapse of the robust operating point, rather than as a literal ratio between two strictly positive payloads.
This result illustrates the severity of the conflict: as Alice's
knowledge of the channel worsens, the cost of satisfying two diverging worst-case conditions simultaneously becomes increasingly dominant, ultimately defining the boundary of
feasible covert communication.
\vspace{-4mm}
\subsection{The Rate Cliff Edge Threshold}
\label{subsubsec:cliff_edge}
A notable feature observed in our numerical results is the ``rate cliff edge'', a critical uncertainty level beyond which the guaranteed covert payload abruptly drops to zero. This is not a numerical artifact but a threshold effect inherent in the adopted reliability model. In our framework, reliability is certified through the hashing-bound achievable rate of the effective depolarizing channel. Once the worst-case channel becomes sufficiently noisy, this certified achievable rate vanishes. In our model, the relevant rate is $R_{\text{worst}} = \left[1 - H(\vec{p}_{\text{worst}})\right]^+$, where $H(\cdot)$ is the Shannon entropy. The rate drops to zero if and only if the entropy of the worst-case equivalent Pauli channel meets or exceeds one, i.e., $H(\vec{p}_{\text{worst}}) \ge 1$. This allows us to define a critical depolarizing probability, $p_{\text{crit}}$, as the exact value of $p$ for which the entropy equals one. This constant is found by numerically solving the equation:
\begin{equation}
\begin{split}
    -\left(1 - \frac{3p_{\text{crit}}}{4}\right)\log_2\left(1 - \frac{3p_{\text{crit}}}{4}\right) - 3 \left(\frac{p_{\text{crit}}}{4}\right)\log_2\left(\frac{p_{\text{crit}}}{4}\right) = 1.
\end{split}
\label{eq:hashing_bound_ths}
\end{equation}
The cliff edge observed in our simulations corresponds to the uncertainty level at which the worst-case depolarizing probability for Bob reaches this threshold. This is formally stated in the following corollary.

\begin{corollary}[The Rate Cliff Edge]
\label{cor:cliff_edge}
Let $(\eta_0, \bar{n}_{B,0})$ be the nominal channel parameters and $u$ be the uncertainty level. The guaranteed reliable rate $R_{\text{worst}}$ is zero if and only if $p_{\text{worst}}(u) \ge p_{\text{crit}}$, where $p_{\text{crit}} \approx 0.2524$ is the threshold at which the hashing bound \eqref{eq:comm_rate} vanishes. Whenever the threshold is crossed within the admissible uncertainty range, the critical uncertainty level $u_{\text{crit}}$ is the minimal solution of
\begin{equation}
    p_{\text{crit}} = 1 - \frac{\eta_0(1-u_{\text{crit}})}{\left(1 + [1 - \eta_0(1-u_{\text{crit}})]\bar{n}_{B,0}(1+u_{\text{crit}}) \right)^4}.
    \label{eq:p_crit}
\end{equation}
\end{corollary}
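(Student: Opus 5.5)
The argument has two layers. First, study the scalar function $f(p)=1-H(\vec p)$ with $\vec p=[1-\tfrac{3p}{4},\tfrac p4,\tfrac p4,\tfrac p4]$ on $p\in[0,1]$. Second, compose $f$ with the map $u\mapsto p_{\mathrm{worst}}(u)$ supplied by Corollary~\ref{cor:worst_case_p}.

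\textbf{Step 1: behaviour of $f$.} Write $x=3p/4\in[0,3/4]$. Then
\[
H=h_2(x)+x\log_2 3,
\]
where $h_2$ is the binary entropy. Its derivative is
\[
\frac{dH}{dx}=\log_2\frac{1-x}{x}+\log_2 3,
\]
which is strictly positive for $x<3/4$. Hence $H$ is strictly increasing in $p$ on $[0,1)$. At the endpoints, $H(0)=0$ and $H(1)=2$. By continuity and strict monotonicity there is a unique $p_{\mathrm{crit}}\in(0,1)$ with $H(\vec p_{\mathrm{crit}})=1$, which is exactly Eq.~\eqref{eq:hashing_bound_ths}. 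A numerical root-find gives $p_{\mathrm{crit}}\approx 0.2524$. Consequently $R=[1-H]^+=0$ if and only if $p\ge p_{\mathrm{crit}}$. Applied at $p=p_{\mathrm{worst}}$, this is the first claim: $R_{\mathrm{worst}}=0 \iff p_{\mathrm{worst}}(u)\ge p_{\mathrm{crit}}$.

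\textbf{Step 2: the threshold $u_{\mathrm{crit}}$.} Under symmetric uncertainty $u$ we have $\eta_{\min}=(1-u)\eta_0$ and $\bar n_{B,\max}=(1+u)\bar n_{B,0}$. Corollary~\ref{cor:worst_case_p} then gives $p_{\mathrm{worst}}(u)$ as the right-hand side of Eq.~\eqref{eq:p_crit}. Let $g(u)=p_{\mathrm{worst}}(u)$ for $u\in[0,1)$.

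Increasing $u$ enlarges the box $\mathcal U$, since $\eta_{\min}$ decreases and $\bar n_{B,\max}$ increases. By Lemma~\ref{lemma:p}, $g$ is therefore nondecreasing; it is in fact strictly increasing, because $\eta_{\min}$ strictly decreases. Moreover $g$ is continuous.

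The set $\{u: g(u)\ge p_{\mathrm{crit}}\}$ is thus an interval $[u^*,1)$, provided it is nonempty. Nonemptiness is exactly the hypothesis "whenever the threshold is crossed within the admissible range".

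\begin{itemize}
\item If $g(0)=p(\eta_0,\bar n_{B,0})\ge p_{\mathrm{crit}}$, the rate is already zero at nominal parameters. This degenerate case can be handled by convention with $u_{\mathrm{crit}}=0$.
\item Otherwise $g(0)<p_{\mathrm{crit}}\le g(\bar u)$ for some admissible $\bar u$. The intermediate value theorem then gives a solution of $g(u)=p_{\mathrm{crit}}$, i.e.\ of Eq.~\eqref{eq:p_crit}. The infimum of the closed solution set is attained and equals $u^*$. By monotonicity, $R_{\mathrm{worst}}(u)=0$ exactly for $u\ge u^*$, so $u_{\mathrm{crit}}:=u^*$ is the minimal solution.
\end{itemize}

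Strict monotonicity even makes the solution unique, but the statement only needs minimality.

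\textbf{Main obstacle.} The delicate point is that Eq.~\eqref{eq:p_crit} uses $\eta_{\min}$ and $\bar n_{B,\max}$ only, so the cap $\min\{(1+u)\eta_0,1\}$ on $\eta_{\max}$ plays no role. I would state this explicitly, so that the corner formula from Corollary~\ref{cor:worst_case_p} remains valid for all $u\in[0,1)$. Beyond that, the monotonicity of $g$ follows from Lemma~\ref{lemma:p} together with nesting of the uncertainty boxes, and the rest is the one-dimensional intermediate value argument above.
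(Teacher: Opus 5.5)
Your proposal is correct and follows essentially the same route as the paper: the hashing bound vanishes iff $H(\vec p_{\mathrm{worst}})\ge 1$, $p_{\mathrm{worst}}(u)$ is monotone in $u$ by Lemma~\ref{lemma:p} and the nesting of the symmetric boxes, and $u_{\mathrm{crit}}$ is the infimum of $\{u : p_{\mathrm{worst}}(u)\ge p_{\mathrm{crit}}\}$. You go further than the paper by proving that $H$ is strictly increasing in $p$ (so $p_{\mathrm{crit}}$ is unique) and by using continuity of $p_{\mathrm{worst}}(u)$ to show that this infimum actually solves Eq.~\eqref{eq:p_crit}; you could also note that nonemptiness is automatic, since $p_{\mathrm{worst}}(u)\to 1$ as $u\to 1^-$.
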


\begin{proof}
By \eqref{eq:comm_rate}, $R_{\text{worst}} = 0$ when $H(\vec{p}_{\text{worst}}) \ge 1$. Since $p_{\text{worst}}(u)$ is monotonically nondecreasing in $u$ (by Lemma~\ref{lemma:p} and the definition of the symmetric uncertainty box), the critical uncertainty level can be defined as
\[
u_{\mathrm{crit}} \triangleq \inf\{u\in[0,1): p_{\mathrm{worst}}(u)\ge p_{\mathrm{crit}}\}.
\]
Whenever the threshold is crossed within the admissible uncertainty range, this defines the unique minimal uncertainty level at which the certified reliable rate under the hashing-bound criterion vanishes. For all $u \ge u_{\mathrm{crit}}$, the condition $p_{\mathrm{worst}}(u) \ge p_{\mathrm{crit}}$ holds, so the certified reliable rate under the hashing-bound criterion is zero, resulting in a total loss of guaranteed payload in our framework.
\end{proof}

\subsection{A Design Map for Robust Covert Communication}
\label{subsec:design_map}

Fig.~\ref{fig:design_map} provides a \emph{design map} over the nominal channel parameters $(\eta_0,\bar n_{B,0})$. The heat colors indicate the critical uncertainty level $u_{\mathrm{crit}}$ beyond which the guaranteed payload drops to zero, obtained by solving the hashing--bound threshold in Eq.~\eqref{eq:hashing_bound_ths} for $p_{\mathrm{crit}}$ and then Eq.~(\ref{eq:p_crit}) for $u_{\mathrm{crit}}$ at each grid point.
The white contours overlay the guaranteed payload
$M_{\mathrm{rob}}(n)$ from Eq.~\eqref{eq:main_result}, evaluated at a fixed uncertainty $u=0.05 (5\%)$ using the corresponding worst-case corners $(\eta_{\min}, \bar n_{B,\min})$ for covertness and $(\eta_{\min}, \bar n_{B,\max})$ for reliability. This plot is interpreted as follows: for a chosen nominal operating point $(\eta_0,\bar n_{B,0})$, (i) the color gives the maximum tolerable relative uncertainty before the \emph{rate cliff} occurs ($R_{\mathrm{worst}}=0$), and (ii) the nearest contour gives the
guaranteed covert payload for the fixed $(n,\delta)$ stated in the caption. The earlier $8.85\%$ cliff result (see Fig.~\ref{fig:performance_vs_uncertainty}; it is also reflected in Fig.~\ref{fig:penalty_vs_uncertainty}) corresponds to one vertical slice of this map at $(\eta_0,\bar n_{B,0})=(0.9,0.12)$.

\begin{figure}[t]
  \centering
  \includegraphics[width=0.375\textwidth]{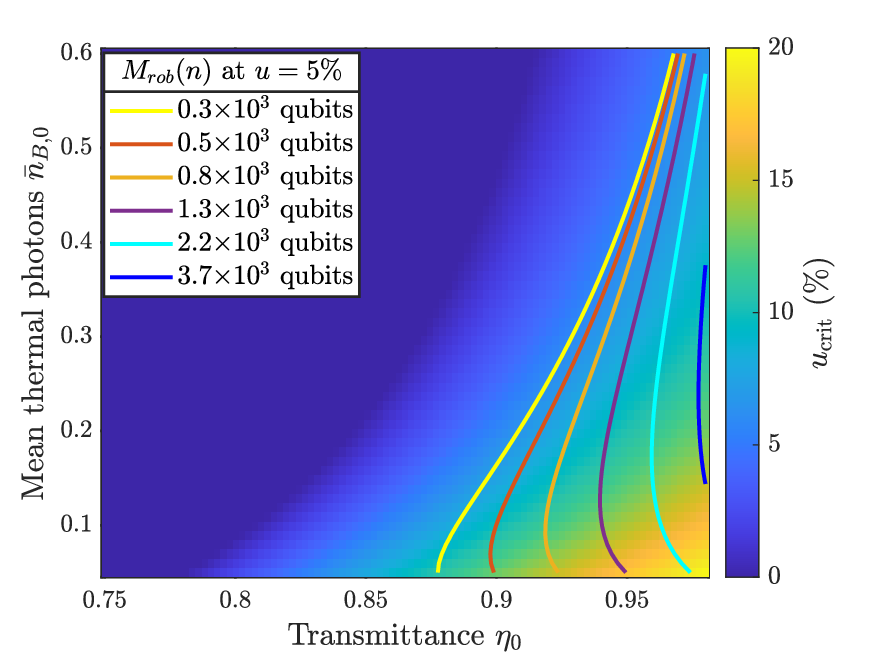}
  \caption{Design map over $(\eta_0,\bar n_{B,0})$ ($\eta_0 \ge 0.75$). Color: critical uncertainty $u_{\mathrm{crit}}$ (\%). White contours: guaranteed covert payload $M_{\mathrm{rob}}(n)$ at $u=5\%$, in units of $10^3$ qubits per block, log-spaced. Parameters: $n=10^8$, $\delta=0.05$. Higher contours imply a larger guaranteed payload.}
  \label{fig:design_map}
  \vspace{-3mm}
\end{figure}

The heatmap in Fig.~\ref{fig:design_map} reveals that only channels with high transmittance ($\eta_0\gtrsim0.9$) tolerate uncertainty meaningfully. As $\eta_0$ decreases, the design enters a sensitivity cliff where even a small mismatch can drive the reliable rate to zero. The overlaid $M_{\mathrm{rob}}(n)$ contours at $u = 5\%$ confirm that both covertness and guaranteed payload degrade sharply outside this robust region. The point $(\eta_0,\bar n_{B,0})=(0.9,0.12)$, analyzed in the 8.85\% cliff study, lies on the boundary of this robust zone.

The white contour lines in Fig.~\ref{fig:design_map} represent isolines of the guaranteed covert payload $M_{\mathrm{rob}}(n)$ at a fixed $u = 5\%$, expressed in thousands of qubits per block.
Moving toward higher contour labels (e.g., $0.5{\rm k}\!\rightarrow\!3.6{\rm k}$)
corresponds to a larger guaranteed payload under the same uncertainty level.
The colored background, in contrast, shows the critical uncertainty
$u_{\mathrm{crit}}$ beyond which the reliable rate vanishes. The intersection of high $u_{\mathrm{crit}}$ and high $M_{\mathrm{rob}}(n)$ defines the desirable robust operating region.

Interestingly, for a fixed transmittance $\eta_0$, some contours intersect a
vertical line at two distinct noise levels $\bar n_{B,0}$. This apparent
non-monotonicity arises from the competing effects of background noise:
moderate thermal noise improves covertness by increasing the masking background,
whereas excessive noise eventually destroys reliability through a larger
depolarizing probability. Consequently, the guaranteed payload
$M_{\mathrm{rob}}(n) \propto c_{\mathrm{cov}}^{\mathrm{rob}}\, R_{\mathrm{worst}}$ first increases and then decreases with $\bar n_{B,0}$, producing two operating points with an identical guaranteed payload.
\vspace{-2mm}

\section{Discussion}
\label{sec:discussion_implications}
Before discussing the numerical results, we clarify the role of the simulations used in this paper. The QuTiP-based simulations are not intended to model a specific deployed hardware platform or experimental setup. Rather, they serve to validate the Willie-side covertness component of the analytical framework by simulating the underlying four-mode bosonic channel, to verify the behavior of the divergence-based covertness metrics under finite-dimensional truncation, and to illustrate the impact of bounded channel uncertainty on the achievable covert payload when combined with the analytical hashing-bound reliability model. The physical relevance of the results is ensured through conservative parameter ranges and worked examples grounded in realistic optical link regimes, as discussed in the preceding sections. The results presented in the previous section reveal several critical features of the system, which we now discuss, beginning with the broad security implications of our findings.
\vspace{-3mm}
\subsection{Security Implications for System Design}
Our analysis provides a quantitative security framework to evaluate covert quantum communication under real-world channel uncertainty. We summarize our findings as several security implications for system designers and operators.

\paragraph{The Robust Bound as a Feasibility Test}
The robust performance bound, $M_{\mathrm{rob}}(n)$, serves as a critical design tool. For a given hardware platform, an engineer can first characterize the operational uncertainty to establish the bounds $[\eta_{\min}, \eta_{\max}]$ and $[\bar n_{B,\min}, \bar n_{B,\max}]$. Our result in Eq.~\eqref{eq:main_result} then allows them to calculate the minimum guaranteed covert payload that the system can certify for a target covertness level $\delta$. This provides a clear pass/fail test: if this guaranteed payload is insufficient for the intended application, the system is not
viable under the specified operational uncertainty.

\paragraph{The Rate Cliff Edge as a Hard Security Boundary}
The ``rate cliff edge,'' illustrated directly in Fig.~\ref{fig:performance_vs_uncertainty} and reflected in Fig.~\ref{fig:penalty_vs_uncertainty}, is not a graceful degradation, but a hard security boundary. For our nominal parameters, this threshold occurs at an uncertainty level of approximately 8.85\%. If a system is deployed in an environment where physical fluctuations could plausibly exceed this boundary, it cannot be considered robust. The covert channel effectively collapses in our framework, as a reliable link to Bob can no longer be guaranteed under the adopted worst-case criterion. This translates a physical environmental property (uncertainty) into a binary robustness outcome under the adopted worst-case criterion.

\begin{table*}[!t]
\centering
\caption{Summary of Security and Performance Guarantees}
\label{tab:guarantees}
\begin{tabular}{p{3.2cm}|p{6.0cm}|p{6.0cm}}
\hline
\textbf{Property} & \textbf{Guarantee Provided} & \textbf{Assumption / Condition} \\
\hline
\hline
\textbf{Covertness} \newline (Undetectability)
& Ensures Willie’s error probability satisfies $P_e \ge \frac{1}{2} - \delta$
& Alice uses strategy based on robust covertness constant $c_{\text{cov}}^{\text{robust}}$ \\
\hline
\textbf{Reliability} \newline (Decodability)
& Guarantees communication rate $R_{\text{worst}}$ under worst-case conditions
& Error correction matches worst-case depolarizing parameter $p_{\text{worst}}$ \\
\hline
\textbf{Robustness to Uncertainty} & Guarantees at least $M_{\mathrm{rob}}(n)$ for all parameters in the uncertainty region
& Alice knows bounds:
$[\eta_{\min}, \eta_{\max}]$,
$[\bar n_{B,\min}, \bar n_{B,\max}]$ \\
\hline
\textbf{Resistance to Passive Adversary}
& Covers adversaries with optimal quantum detection (Helstrom bound)
& Adversary is passive and non-interfering \\
\hline
\textbf{Channel Model Validity}
& Applies to the lossy thermal-noise bosonic channel model adopted in this paper
& Assumes the channel regime and reduction used in the analysis \\
\hline
\end{tabular}
\vspace{-5mm}
\end{table*}

\paragraph{The Cost of Conflict as a Fundamental Security Tax}
The conflict penalty, quantified in Fig.~\ref{fig:penalty_vs_uncertainty}, can be interpreted as a fundamental ``security tax'' that a system designer must pay to ensure robust operation in the feasible pre-cliff regime. It represents the performance lost because the system must simultaneously prepare for two different worst-case scenarios: Willie's most favorable condition for detection and Bob's most adverse condition for reception. Understanding this cost is critical, as this penalty can only be reduced by improving Alice’s operational knowledge of the channel, highlighting the critical role of sensing, calibration, or environmental control in the design of a secure communication system. Beyond the rate-cliff edge, the robust operating point itself collapses under the adopted worst-case criterion, so the post-cliff 100\% level in Fig.~\ref{fig:penalty_vs_uncertainty} should be interpreted only as an indicator of complete loss of feasible guaranteed covert communication. This underlines the practical value of the validated analytical bounds, which enable rapid exploration of system performance under uncertainty without requiring full quantum state simulation. Table~\ref{tab:guarantees} summarizes the key security and performance guarantees established by our robust framework, along with the assumptions needed for them to be valid.
\vspace{-3mm}
\subsection{Worked Examples: End-to-End Robust Guarantee}
We evaluate the robust bound for two representative operating regimes: urban nighttime free-space optical (FSO) links and short metro/in-building fiber. By evaluating the covertness and reliability corners,
$c_{\mathrm{cov}}^{\mathrm{rob}} = c_{\mathrm{cov}}(\eta_{\min}, \bar n_{B,\min})$ and
$p_{\mathrm{worst}} = p(\eta_{\min}, \bar n_{B,\max})$,
over the specified uncertainty intervals, we apply Eq.~\eqref{eq:main_result} to certify the covert payload for all channels in the uncertainty set.
More precisely, Eq.~\eqref{eq:main_result} yields $M_{\mathrm{rob}}(n)$, a guaranteed worst-case lower bound on the \emph{expected} covert payload that holds uniformly for all $(\eta,\bar n_B)\in\mathcal{U}$.
The resulting certified quantities, including $M_{\mathrm{rob}}(n)$, are summarized in Table~\ref{tab:worked_examples}. The parameter intervals for the FSO and fiber regimes are aligned with representative experimental measurements reported in the literature~\cite{Mostafa2012UrbanFSO,Alheadary2018FSOChar,Avesani2021DaylightFSO,Alimi2024FSOSurvey,Hiskett2006FiberLoss,Dynes2016MCF,verma2015high,Tripathy2024SNSPDReview}.

\begin{table}[t]
  \caption{\textbf{Worked examples under bounded uncertainty.} $n=10^8,\ \delta=0.05$. FSO: $\eta\!\in\![0.90,0.98]$, $\bar n_B\!\in\![0.02,0.12]$; Fiber: $\eta\!\in\![0.80,0.90]$, $\bar n_B\!\in\![0.001,0.02]$. Outputs: $c_{\mathrm{cov}}^{\mathrm{rob}}$, $p_{\mathrm{worst}}$, $R_{\mathrm{worst}}$, and certified $M_{\mathrm{rob}}(n)$ via \eqref{eq:main_result}.}
  \label{tab:worked_examples}
  \centering
  \vspace{0.25em}
  \begin{tabular}{p{2.6cm}cccc}
    \hline
    \textbf{Scenario} 
    & $c_{\mathrm{cov}}^{\mathrm{rob}}$ 
    & $p_{\mathrm{worst}}$ 
    & $R_{\mathrm{worst}}$ 
    & $M_{\mathrm{rob}}(n)$ \\
    \hline
    Nighttime FSO %(urban, clear)
    & 1.9144 & 0.1419 & 0.3422 & 655.0 \\
    Short fiber %(metro/in-building)
    & 0.2001 & 0.2127 & 0.1141 & 22.82 \\
    \hline
  \end{tabular}
  \vspace{0.35em}
\end{table}

In practice, uncertainty bounds can be obtained from channel probing, environmental sensing, and predictive physical models. In FSO links, for example, transmittance and background-noise bounds may be estimated from pilot measurements and radiometric sensing. Our analysis then provides the robust performance guarantee for any such estimated uncertainty region.

\section{Conclusion}
\label{sec:conclusion}

%\subsection{Summary of Contributions}
This paper introduced a robust analytical framework for covert quantum communication under bounded channel uncertainty. While prior studies established covert communication limits under idealized assumptions, our work addresses a key gap by relaxing the requirement of perfect channel state knowledge. We considered uncertainty in both transmissivity and background thermal noise (quantities that, in practice, are estimated over bounded intervals). Our analysis revealed a fundamental trade-off: the worst-case assumptions that ensure covertness are in direct conflict with those required for reliable decoding. To address this, we derived a closed-form guaranteed worst-case lower bound on the expected number of covert qubits that can be reliably and covertly transmitted across all admissible channels. This bound provides a security guarantee for covert communication in uncertain environments.

To complement the theory, we conducted QuTiP simulations of the Willie-side bosonic channel. These verified Fock-space truncation via convergence testing and confirmed close agreement for the covertness component used in the robust payload evaluation. Combined with the analytical hashing-bound reliability model, they quantified the rate penalty incurred by uncertainty, offering guidance for system designers.

Beyond point-to-point covert links, the proposed framework has direct implications for quantum networking. In networked settings, bounded uncertainty arises from dynamic routing, atmospheric conditions, and heterogeneous hardware across links. The compound-channel characterization and feasibility boundary provide a conservative criterion for link admission, covert scheduling, and resource allocation in quantum-secure networks. In particular, the collapse of the guaranteed covert payload under excessive uncertainty highlights the need for uncertainty-aware design when integrating covert communication into larger quantum-classical network infrastructures.

%\subsection{Future Work}
Several avenues remain open for future research. One important direction is to extend the framework to dynamic or stochastic channel models, where transmittance and noise parameters evolve or follow known statistical distributions. Another is the development of adaptive transmission strategies that incorporate real-time channel estimation, learning, or feedback to tighten uncertainty bounds during operation. Finally, generalizing the model to account for additional adversarial behaviors (such as active jamming or side-channel leakage)
% or covert entanglement distribution)
would further strengthen the robustness and applicability of covert quantum communication in practical deployments.
\vspace{-2mm}

\appendix
\section{Proofs}

\subsection{Proof of Lemma~\ref{lemma:ccov}}
\label{app:proofs-ccov}

\begin{proof}
To establish the monotonicity of the covertness constant $c_{\mathrm{cov}}$ in Eq.~\eqref{eq:ccov}, we examine its derivatives with respect to $\bar n_B$ and $\eta$ over the physical domain $\eta\in(0,1)$ and $\bar n_B>0$.

\paragraph{Dependence on $\bar n_B$}
Differentiating $c_{\mathrm{cov}}$ with respect to $\bar n_B$ gives
\begin{align}
\frac{\partial c_{\mathrm{cov}}}{\partial \bar n_B}
=
\frac{\sqrt{2}\,\eta\!\left(1+2\eta\bar n_B\right)}
{2(1-\eta)\sqrt{\eta\bar n_B(1+\eta\bar n_B)}}.
\end{align}
For $\eta\in(0,1)$ and $\bar n_B>0$, every factor in the numerator and denominator is strictly positive. Hence $\frac{\partial c_{\mathrm{cov}}}{\partial \bar n_B}>0$, which shows that $c_{\mathrm{cov}}$ is strictly increasing in $\bar n_B$.

\paragraph{Dependence on $\eta$}
Differentiating $c_{\mathrm{cov}}$ with respect to $\eta$ yields
\begin{align}
\frac{\partial c_{\mathrm{cov}}}{\partial \eta}
=
\frac{\sqrt{2}\,\bar n_B\!\left(1+2\eta\bar n_B+\eta\right)}
{2(\eta-1)^2\sqrt{\eta\bar n_B(1+\eta\bar n_B)}}.
\end{align}
Again, all terms are strictly positive over the domain of interest: $\bar n_B>0$, $(\eta-1)^2>0$ for $\eta\neq 1$, and the square-root term is positive. Therefore $\frac{\partial c_{\mathrm{cov}}}{\partial \eta}>0,$ so $c_{\mathrm{cov}}$ is strictly increasing in $\eta$. Thus, $c_{\mathrm{cov}}$ is monotonically increasing in both $\bar n_B$ and $\eta$ over the physical domain.
\end{proof}
\vspace{-5mm}
\subsection{Proof of Lemma~\ref{lemma:p}}
\label{app:proofs-p}

\begin{proof}
To determine the worst case for reliability, we analyze the depolarizing parameter $p(\eta, \bar n_B)=1-\frac{\eta}{\left(1+(1-\eta)\bar n_B\right)^4}$. Since $p(\eta,\bar n_B)=1-f(\eta,\bar n_B)$, its monotonicity is the reverse of that of $f(\eta,\bar n_B)=\frac{\eta}{\left(1+(1-\eta)\bar n_B\right)^4}$.

\paragraph{Dependence on $\bar n_B$}
We compute the partial derivative of $f$ with respect to $\bar n_B$:
\begin{equation}
    \frac{\partial f}{\partial \bar n_B} = \frac{-4\eta(1-\eta)}{\left(1 + (1 - \eta)\bar n_B\right)^5}.
\end{equation}
Since $\eta \in (0,1)$, the numerator is negative while the denominator is positive. Thus, $\frac{\partial f}{\partial \bar n_B} < 0$, meaning $f$ decreases with $\bar n_B$. Consequently, $p=1-f$ is a monotonically increasing function of $\bar n_B$.

\paragraph{Dependence on $\eta$}
We compute the partial derivative of $f$ with respect to $\eta$:
\begin{align}
    \frac{\partial f}{\partial \eta} = \frac{1 + \bar n_B(1+3\eta)}{\left(1 + (1 - \eta)\bar n_B\right)^5}.
\end{align}
For all valid parameters, the numerator and denominator are strictly positive. Thus, $\frac{\partial f}{\partial \eta} > 0$, so $f$ increases with $\eta$. Consequently, $p=1-f$ is a decreasing function of $\eta$.
\end{proof}
\vspace{-5mm}
\subsection{Proof of Theorem~\ref{thm:main_result}}
\label{app:Proof_thm1}
\begin{proof}
To guarantee covertness and reliability uniformly over the compound uncertainty set $ \mathcal{U}=[\eta_{\min},\eta_{\max}] \times [\bar n_{B,\min},\bar n_{B,\max}]$,
Alice must choose a single transmission probability $q$ and a single coding rate $R$ that are feasible for every realization $(\eta,\bar n_B)\in\mathcal{U}$. By Corollary~\ref{cor:robust_covertness}, covertness is guaranteed for every channel realization in $\mathcal{U}$ provided that $q \le \frac{2\delta\, c_{\mathrm{cov}}^{\mathrm{robust}}}{\sqrt{n}}$,
\[
c_{\mathrm{cov}}^{\mathrm{robust}}
=
\min_{(\eta,\bar n_B)\in\mathcal{U}} c_{\mathrm{cov}}(\eta,\bar n_B)
=
c_{\mathrm{cov}}(\eta_{\min},\bar n_{B,\min}).
\]
By Corollary~\ref{cor:worst_case_p}, a communication rate is reliably achievable for every realization in $\mathcal{U}$ provided that
\[
\begin{aligned}
R \le R_{\mathrm{worst}},
\qquad
R_{\mathrm{worst}}=\bigl[1-H(\vec p_{\mathrm{worst}})\bigr]^+,\\
p_{\mathrm{worst}}
=
\max_{(\eta,\bar n_B)\in\mathcal{U}} p(\eta,\bar n_B)
=
p(\eta_{\min},\bar n_{B,\max}).
\end{aligned}
\]

We now define the robust policy by choosing $q_{\mathrm{rob}} \triangleq \frac{2\delta\, c_{\mathrm{cov}}^{\mathrm{robust}}}{\sqrt{n}}$, and $R_{\mathrm{rob}} \triangleq R_{\mathrm{worst}}$. By construction, this policy is feasible for every $(\eta,\bar n_B)\in\mathcal{U}$.

Under the randomized transmission model, the expected number of covert qubits under this policy is
\[
\begin{aligned}
\mathbb{E}[M(n)\mid & q_{\mathrm{rob}},R_{\mathrm{rob}}]
=
n\, q_{\mathrm{rob}}\, R_{\mathrm{rob}} \\
&=
n\left(\frac{2\delta\, c_{\mathrm{cov}}^{\mathrm{robust}}}{\sqrt{n}}\right)R_{\mathrm{worst}} =
2\sqrt{n}\, c_{\mathrm{cov}}^{\mathrm{robust}}\, R_{\mathrm{worst}}\, \delta.
\end{aligned}
\]
Since this same feasible policy is used for all channel realizations in $\mathcal{U}$, and the quantity $\mathbb{E}[M(n)\mid q_{\mathrm{rob}},R_{\mathrm{rob}}]=n q_{\mathrm{rob}} R_{\mathrm{rob}}$ is independent of $(\eta,\bar n_B)$ once the policy is fixed, we obtain
\[
\begin{aligned}
M_{\mathrm{rob}}(n)
&\triangleq
\inf_{(\eta,\bar n_B)\in\mathcal{U}}
\mathbb{E}[M(n)\mid q_{\mathrm{rob}},R_{\mathrm{rob}}] \\
&=
2\sqrt{n}\, c_{\mathrm{cov}}^{\mathrm{robust}}\, R_{\mathrm{worst}}\, \delta.
\end{aligned}
\]
In particular,
\[
M_{\mathrm{rob}}(n)
\ge
2\sqrt{n}\, c_{\mathrm{cov}}^{\mathrm{robust}}\, R_{\mathrm{worst}}\, \delta,
\]
which proves the claimed guaranteed lower bound.
\end{proof}
\vspace{-5mm}
\subsection{Proof of Proposition~\ref{prop:baseline_fails}}
\label{app:Proofs-baseline_fails}

\begin{proof}
By construction, the naive policy is tuned at the nominal point $(\eta_0,\bar n_{B,0})$, so
\[
q_{\mathrm{nom}}
=
\frac{2\delta\, c_{\mathrm{cov}}(\eta_0,\bar n_{B,0})}{\sqrt{n}},
\quad
R_{\mathrm{nom}}
=
\bigl[1-H(\vec p(\eta_0,\bar n_{B,0}))\bigr]^+.
\]

Since $c_{\mathrm{cov}}(\eta,\bar n_B)$ is increasing in $\eta$ and $\bar n_B$ by Lemma~\ref{lemma:ccov}, its minimum over $\mathcal{U}_{\mathrm{base}}$ occurs at $(\eta_{\min},\bar n_{B,\min})$, and
\[
c_{\mathrm{cov}}(\eta_{\min},\bar n_{B,\min})
<
c_{\mathrm{cov}}(\eta_0,\bar n_{B,0}).
\]
Therefore,
\[
q_{\mathrm{nom}}
=
\frac{2\delta\, c_{\mathrm{cov}}(\eta_0,\bar n_{B,0})}{\sqrt{n}}
>
\frac{2\delta\, c_{\mathrm{cov}}(\eta_{\min},\bar n_{B,\min})}{\sqrt{n}},
\]
so the nominally chosen transmission probability violates the covertness requirement at $(\eta_{\min},\bar n_{B,\min})\in\mathcal{U}_{\mathrm{base}}$. Similarly, by Lemma~\ref{lemma:p}, $p(\eta,\bar n_B)$ decreases with $\eta$ and increases with $\bar n_B$, so its maximum over $\mathcal{U}_{\mathrm{base}}$ occurs at $(\eta_{\min},\bar n_{B,\max})$. Hence
\[
p(\eta_{\min},\bar n_{B,\max}) > p(\eta_0,\bar n_{B,0}).
\]
Moreover, for the depolarizing-channel hashing bound
\[
R(p)=\bigl[1-H(\vec p)\bigr]^+,
\qquad
\vec p=\left[1-\frac{3p}{4},\frac{p}{4},\frac{p}{4},\frac{p}{4}\right],
\]
the rate is monotonically nonincreasing in $p$. Indeed, on the interval where $R(p)>0$, the entropy term
\[
H(\vec p)
=
-\left(1-\frac{3p}{4}\right)\log_2\!\left(1-\frac{3p}{4}\right)
-3\left(\frac{p}{4}\right)\log_2\!\left(\frac{p}{4}\right)
\]
is monotonically increasing in $p$, and once it reaches $1$, the operator $[\cdot]^+$ keeps $R(p)$ at zero. Therefore,
\[
R_{\mathrm{nom}}
=
\bigl[1-H(\vec p(\eta_0,\bar n_{B,0}))\bigr]^+
>
\bigl[1-H(\vec p(\eta_{\min},\bar n_{B,\max}))\bigr]^+.
\]
Thus, the nominally chosen coding rate is not uniformly reliable over $\mathcal{U}_{\mathrm{base}}$. Therefore, the nominally tuned policy fails to satisfy at least one of the required constraints for some realizations in $\mathcal{U}_{\mathrm{base}}\subseteq\mathcal{U}$. By the definition of certified worst-case covert-and-reliable payload adopted in Proposition~\ref{prop:baseline_fails}, any policy that is not uniformly feasible over $\mathcal{U}$ has guaranteed payload zero. Hence $M_{\mathrm{rob,naive}}(n)=0$.
\end{proof}
\vspace{-5mm}

%\subsection{Reproduction and Availability}
%For submission, we avoid public links to preserve double-blind review. Upon acceptance, we will release a complete, versioned package on Zenodo (open access, DOI) containing: (i) code and scripts for all figures/tables, (ii) environment files (a minimal \texttt{requirements.txt} and an exact lockfile), (iii) fixed seeds and parameter grids, and (iv) step-by-step reproduction instructions and a Makefile. Code will be licensed under MIT; generated data/figures under CC~BY~4.0.%
%\unskip

\bibliographystyle{IEEEtran}
\bibliography{references}

\vfill

\end{document}